\newenvironment{bprooftree}
{\leavevmode\hbox\bgroup}
{\DisplayProof\egroup}
\theoremstyle{thmstyleone}%
\theoremstyle{thmstyletwo}%
\begin{document}

\title[Computational Paths - Computational Paths]{Computational Paths - - An approach in the $LND_{EQ}-TRS_{2}$ system.}


\author*[1]{\fnm{Tiago} \sur{M.L. de Veras}}\email{tiago.veras@ufrpe.br}

\author[2]{\fnm{Arthur} \sur{F. Ramos}}\email{arthur742@gmail.com}
\author[3]{\fnm{Ruy} \sur{J.G.B. de Queiroz}}\email{ruy@cin.ufpe.br}

\author[3]{\fnm{Anjolina} \sur{G. de Oliveira}}\email{ago@cin.ufpe.br}

\affil*[1]{\orgdiv{Mathematics Department}, \orgname{ Universidade Federal Rural de Pernambuco}, \orgaddress{\street{Rua Dom Manuel de Medeiros}, \city{Recife}, \postcode{52171-900}, \state{PE}, \country{Brasil}}}

\affil[2]{\orgname{Microsoft Redmond}, \orgaddress{\street{700 Bellevue Way NE}, \city{Bellevue}, \postcode{98004}, \state{WA}, \country{USA}}}

\affil[3]{\orgdiv{Centro de Inform\'atica}, \orgname{Universidade Federal de Pernambuco}, \orgaddress{\street{Av. Jornalista Anibal Fernandes s/n}, \city{Recife}, \postcode{50740-560}, \state{PE}, \country{Brasil}}}


\abstract{We use a labelled deduction system ( LND$_{ED-}$TRS ) based on the concept of computational paths (sequences of rewrites) as equalities between two terms of the same type, which allowed us to carry out in homotopic theory an approach using the concept of computational paths. From this, we show that the computational paths can be used to perform the proofs of the $LND_{EQ}-TRS_{2}$ rewriting system.}

\keywords{Category Theory, Labelled Natural Deduction, Term Rewriting System, Computational Paths, Algebraic Topology.}



\maketitle
\section{Introduction}\label{sec1}

The identity type is arguably one of the most interesting entities of  Martin-L\"{o}f type theory (MLTT). From any type $A$, it is possible to construct the identity type $Id_A (x,y)$ whose inhabitants (if any) are proofs of equality between $x$ and $y$. This type establishes the relation of identity between two terms of $A$, i.e., if there is a construction $x =_p y: A$, then $p$ is a witness or proof that $x$ is indeed equal to $y$. Both $Id_A (x,y)$ and $x =_p y: A$ are types of equality, the second being an explicit way of defining equality. The proposal of the Univalence Axiom made the identity type perhaps one of the most studied aspects of type theory in the last decade or so. It proposes that in type theory, to say $x=y$ is equivalent to saying that $x\simeq y$, that is,the identity type is equivalent to the type of equivalences. Another important aspect is the fact that it is possible to interpret the paths between two points of the same space. This interpretation gives rise to the interesting view of equality as a collection of homotopical paths. And such connection of type theory and homotopy theory makes type theory a suitable foundation for both computation and mathematics. Nevertheless, in the original formulation of intensional identity type in MLTT, this interpretation is only a semantical one 
 and it was not proposed with a syntactical counterpart for the concept of path in type theory.

    For that reason, the addition of paths to the syntax of homotopy type theory has been recently proposed by \cite{Ruy1,Art3,Art4,Art5}, in these works, the authors use an entity known as `computational path', proposed by \cite{Ruy4}, and show that it can be used to formalize the identity type in a more explicit manner.

On the other hand, one of the main interesting points of the interpretation of logical connectives via deductive systems which use a labelling system is the clear separation between a functional calculus on the labels/terms (the names that record the steps of the proof) and a logical calculus on the formulas \cite{Lof1},\cite{Ruy4}. Moreover, this interpretation has important applications. The works of \cite{Ruy1},\cite{Ruy4},\cite{Ruy5},\cite{RuyAnjolinaLivro} show that the harmony that comes with this separation makes labelled natural deduction a suitable framework to study and develop a theory of equality for natural deduction. Take, for example, the following cases taken from the $\lambda$-calculus \cite{Ruy5}:

\begin{center}
$(\lambda x.(\lambda y.yx)(\lambda w.zw))v \rhd_{\eta} (\lambda x.(\lambda y.yx)z)v \rhd_{\beta} (\lambda y.yv)z \rhd_{\beta} zv$

$(\lambda x.(\lambda y.yx)(\lambda w.zw))v \rhd_{\beta} (\lambda x(\lambda w.zw)x)v \rhd_{\eta} (\lambda x.zx)v \rhd_{\beta} zv$
\end{center}

In the theory of the $\beta\eta$-equality of $\lambda$-calculus, we can indeed say that $(\lambda x.(\lambda y.yx)(\lambda w.zw))v$ is equal to $zv$. Moreover, as we can see above, we have at least two ways of obtaining these equalities. We can go further, and call $s$ the first sequence of \textit{rewrites} that establish that $(\lambda x.(\lambda y.yx)(\lambda w.zw))v$ is indeed equal to $zv$. The second one, for example, we can call $r$. Thus, we can say that this equality is establish by $s$ and $r$. As we will see in this paper, $s$ and $r$ are examples of an entity known as \textit{computational paths}. 

Since we now have terms (alongside formulas) which are supposed to formalise sequences of rewrites (i.e., computational paths) establishing the equality between two path-terms, interesting questions might arise: Is $s$ different from $r$, or are they normal forms of this equality proof? If $s$ is equal to $r$, how can we prove this? We can answer questions like this when we work in a labelled natural deduction framework. The idea is that we are not limited by the calculus on the formulas, but we can also define and work with rules that apply to the labels/terms. That way, we can use these rules to formally establish the equality between these labels, i.e., establish equalities between equalities. In this work, we will use a system proposed by \cite{Anjo1} and known as $LND_{EQ}$-$TRS$.

By introducing a framework that formalizes the logical notion of equality through identity type, Martin-L\"{o}f type theory allows a surprising connection between the rewriting of labeled terms and homotopy theory. As a matter of fact, MLTT allows for making useful bridges between theory of computation, algebraic topology, logic, categories, and higher algebra, and a single concept seems to serve as a bridging bond: \emph{path}. The impact in mathematics has been felt more strongly since the start of Vladimir Voevodsky's program on the univalent foundations of mathematics around 2005, and one specific aspect which we would like to mention here is the calculation of fundamental groups of spaces. Most algebraic topology textbooks, when bringing in the definition of ``homotopy group", draw the attention to the fact that the calculation of homotopy groups is in general much more difficult than some of the other homotopy invariants learned in algebraic topology. Now, by using our own alternative formulation of the ``identity type" which provides an explicit formal account of ``path", operationally understood as an invertible sequence of rewrites (such as Church’s ``conversion" between $\lambda$-terms), and interpreted as a homotopy, we have been engaged in a series of papers to show examples of calculating fundamental groups of surfaces such as the circle, the torus, the 2-holed torus, the Klein bottle,  the real projective plane, and a few more. We would like to suggest that these examples  bear witness to the positive impact of MLTT in mathematics by offering formal tools to calculate and prove fundamental groups.

In a different application of computational paths, we have worked on a series of publications towards a generalisation of domain theory by defining a model of type-free $\lambda$-calculus with a groupoid structure. The idea is that solving recursive domain equations over a Cartesian closed $0$-category is a way to find extensional models of the $\lambda$-calculus. In \cite{Martinez-Rivillas-phd} and a series of papers \cite{Martinez-Rivillas-igpl,Martinez-Rivillas-aml,Martinez-Rivillas-bsl} we seek to generalise these equations to ``homotopy domain equations'' in order to build a particular Cartesian closed ``$(0,\infty)$-category'', which we call the Kleisli $\infty$-category, and thus finding higher $\lambda$-models, which are referredto as  ``$\lambda$-homotopic models''. To arrive at those objectives, we had to previously generalise c.p.o's (complete partial orders) to c.h.p.o's (complete  homotopy partial orders); complete ordered sets to complete (weakly) ordered Kan complexes, $0$-categories to $(0,\infty)$-categories and the Kleisli bicategory to a Kleisli $\infty$-category. Syntactical $\lambda$-models (e.g., the set $D_\infty$), defined on sets, are generalised to ``homotopic syntactical $\lambda$-models'' (e.g., the Kan complex ``$K_\infty$''), which are defined on Kan complexes, and go further to study the relationship of these models with the homotopic $\lambda$-model. Additionally, from the syntactic point of view, we explore what the theory of an arbitrary homotopic $\lambda$-model would be like, which turns out to contain a theory of higher $\lambda$-calculus, which we call Homotopy Type-Free Theory (HoTFT); with higher $\beta\eta$-contractions and thus with higher $\beta\eta$-conversions.

As for the basis for the formulation of the logical system which we call labelled natural deduction, we have developed a series of justified connections between a wide spectrum of concepts and techniques ranging from the philosophical works of Wittgenstein to a fundamental collection of techniques to formalise proofs in mathematics. This has been a sustained effort to fulfill a task going back at least 35 years, and the expectation is to continue with the same determination  as always. A recent paper belonging to a series of articles which began in 1987--1988, followed by some more in the 1990s has just been published \cite{Ruy-SATS}. This view of proofs and meaning originated in \cite{Ruy-Dialectica1,Ruy-Dialectica2,Ruy-ZML,Ruy-ZML2,Ruy-PhD,Ruy-Dialectica3} and led to reformulating intuitionistic type theory which gave rise to technical results such as those documented in a book \cite{Aruy33} as well as in several articles since then up until 2023, including the present one.


\section{Computational paths} \label{cpaths}

Let us begin by introducing the main work tool, an entity known as computational paths. The proposed by \cite{Awodney2007} establishes a connection between homotopy theory and computational logic. From this, in \cite{Art3}, provides a semantic interpretation of the identity type which states that terms of the identity type can be understood as homotopic paths between two points in a space. Thus, inspired by the path-based approach of the homotopy interpretation, we can use a similar approach to define the identity type in type theory, this time as part of the deductive calculus, with a view to formalize sequences of rewrites from term to term, which we are referring to as computational paths. 

The interpretation will of course be akin to the homotopy case: a term $p : Id_{A}(a,b)$ will be a computational path between terms $a, b : A$, and such path will be the result of a sequence of rewrites. In the sequel, we shall give a formal definition. The main idea, i.e.\ proofs of equality statements as (reversible) sequences of rewrites, is not new, as can be seen in the work developed by \cite{mesenguer} that also establishes equality via the rewrite system. However, this work is based on the rewriting equity system proposed in the paper entitled "Equality in labeled deductive systems and the functional interpretation of propositional equality, presented in December 1993 at the {\em 9th Amsterdam Colloquium\/}, and published by \cite{Ruy4}.

Indeed, one of the most interesting aspects of the identity type is the fact that it can be used to construct higher structures. This is a rather natural consequence of the fact that it is possible to construct higher identities. For any $a, b : A$, we have type $Id_{A}(a,b)$. If this type is inhabited by any $p, q:Id_{A}(a,b)$, then we have type $Id_{Id_{A}(a,b)}(p,q)$. If the latter type is inhabited, we have a higher equality between $p$ and $q$ \cite{harper1}. This concept is also present in computational paths. One can show the equality between two computational paths $s$ and $t$ by constructing a third one between $s$ and $t$. We show in this chapter a system of rules used to establish equalities between computational paths \cite{Anjo1}. 

Another important question we want to answer is one that arises naturally when talking about equality: Is there a canonical proof for an expression $t_{1} = t_{2}$? In the language of computational paths, is there a normal path between $t_{1}$ and $t_{2}$ such that every other path can be reduced to this one? In \cite{Arttese}, it was proved that the answer is negative, this model also refutes the Uniqueness of Identity Proofs.

\subsection{Introducing computational paths} \label{path}

Before we get into the details of what those computational paths are, let us recall what motivated the introduction of computational paths to type theory. In type theory, the rules for the construction of elements of our types can arguably be taken to have originated the so-called Curry-Howard correspondence, namely, the Brouwer-Heyting-Kolmogorov Interpretation (BHK), where propositions are defined by what constitutes a proof of it. That way, a semantic interpretation of formulas are not given by truth-values, but by the concept of proof as a primitive notion. In our complement to the so-called BHK interpretation, we have argued for the need to define propositions not just by their proof-conditions, i.e.\ their `introduction' rules, but also by what can (immediately) be drawn from them, which amounts to the need for the definition of the so-called `reduction' rules, as defined originally by Prawitz on the basis of the so-called Inversion Principle: how do the elimination rules act in the result of the corresponding introduction rules. And this means that in our formulation of the rules of proof in Labelled Natural Deduction, the idea is to give a formal account to a (suitably defined) `harmony' between the rules on the terms (which operate on proof constructions) and the rules of deduction (which operate on propositions). In this setting, we are not committing ourselves to the completeness (or otherwise) of the BHK interpretation to intuitionistic logic.\footnote{Here we wish to thank an anonymous reviewer for having raised the issue of using BHK as ``a 
failure-free way of providing meaning to Intuitionism''. We are certainly aware of the work by Piecha, 
Schroeder-Heister and Sanz \cite{Piecha} in the context of proof-theoretic semantics, which is based on a specific reading of Wittgenstein's `meaning is use' paradigm, but here we are taking a different perspective on proofs and meaning which advocates for the meaning of a proposition being given by the explanation of its (immediate) consequences, which is formalised by the reduction rules \cite{Ruy-ZML,Ruy-ZML2,Ruy-PhD,Ruy-SATS}. This finds a parallel in game/dialogical accounts of meaning such as Hintikka's game-theoretic semantics and Lorenzen's dialogue games, with roots in Peirce's account of the `Utterer vs Interpreter' dichotomy in explaining meaning in language. In this sense, our reference to BHK is not directly related to giving meaning to Intuitionism, but rather to point to the origins of a formal system which combines a functional (lambda) calculus on the terms which represent proofs and a logical calculus (natural deduction) on the formulas, where meaning is determined by game/dialogue-like semantics. As a result, as pointed out in \cite{Ruy-SATS}, what we have here may be understood as showing that constructivist semantics need not and should not be `verificationist’, in (at least) Dummett--Prawitz--Martin-L\"of’s sense. The point to be retained from (e.g.) Brouwer’s intuitionism is its constructivism. The `solipsism’ is spurious; but, in this sense, Brouwer and Wittgenstein are correct that understanding must be 1st person, though anyone’s achieving such understanding can of course be within social and worldly contexts, including chalk boards.}
Thus, we have \cite{Ruy1}:

\begin{tabbing}
	\textbf{a proof of the proposition:} \hbox{\ \ \ \ \ } \= \textbf{is given by:} \\
	$A\land B$ \> a proof of $A$ \textbf{and} a proof of $B$ \\ 
	$A\lor B$ \> a proof of $A$ \textbf{or} a proof of $B$ \\
	$A\rightarrow B$ \> a \textbf{function} that turns a proof of $A$ \\
 \> into a proof of $B$ \\
	$\forall x^D.P(x)$ \> a \textbf{function} that turns an element $a$\\
 \> into a proof of $P(a)$ \\
	$\exists x^D.P(x)$ \> an element $a$ (witness) \textbf{and a proof of} $P(a)$ \\
\end{tabbing}

Also, based on the Curry-Howard functional interpretation of logical connectives, we have \cite{Ruy1}:

\begin{tabbing}
	\textbf{a proof of the proposition:} \hbox{\ \ \ \ \ } \= \textbf{has the canonical form of:} \\
	$A\land B$ \> $\langle p,q\rangle$ where $p$ is a proof of $A$ and \\
 \> $q$ is a proof of $B$ \\
	$A\lor B$ \> $i(p)$ where $p$ is a proof of $A$ or\\
 \> $j(q)$ where $q$ is a proof of $B$ \\
	\> (`$i$' and `$j$' abbreviate `into the left/right\\ 
 \> disjunct') \\
	$A\rightarrow B$ \> $\lambda x.b(x)$ where $b(p)$ is a proof of B \\
	\> provided $p$ is a proof of A \\
	$\forall x^A.B(x)$ \> $\Lambda x.f(x)$ where $f(a)$ is a proof of $B(a)$ \\ 
	\> provided $a$ is an arbitrary individual chosen\\
	\> from the domain $A$\\ 
	$\exists x^A.B(x)$ \> $\varepsilon x.(f(x),a)$ where $a$ is a witness\\
	\> from the domain $A$, $f(a)$ is a proof of $B(a)$ \\
\end{tabbing}

If one looks closely, there is one interpretation missing in the BHK-Interpretation. What constitutes a proof of $t_{1} = t_{2}$? In \cite{Ruy1} it was proposed that an equality between these two terms should be a sequence of rewritings starting at $t_{1}$ and ending at $t_{2}$

We answer this by proposing that an equality between those two terms should be a sequence of rewrites starting from $t_{1}$ and ending at $t_{2}$. Thus, we would have \cite{Ruy1}:

\begin{tabbing}
	\textbf{a proof of the proposition:} \hbox{\ \ \ \ \ } \= \textbf{is given by:} \\
	\\
	$t_1= t_2$ \> ? \\
	\> (Perhaps a sequence of rewrites \\
	\> starting from $t_1$ and ending in $t_2$?) \\
\end{tabbing}

We call computational path the sequence of rewrites between these terms.

\subsection{Formal definition}


Before we define formally a computational path, we can take a look at one famous equality theory, the $\lambda\beta\eta-equality$ \cite{lambda}:

\begin{definition}
	The \emph{$\lambda\beta\eta$-equality} is composed by the following axioms:
	
	\begin{enumerate}
		\item[$(\alpha)$] $\lambda x.M = \lambda y.M[y/x]$ \quad if $y \notin FV(M)$;
		\item[$(\beta)$] $(\lambda x.M)N = M[N/x]$;
		\item[$(\rho)$] $M = M$;
		\item[$(\eta)$] $(\lambda x.Mx) = M$ \quad $(x \notin FV(M))$.
	\end{enumerate}
	
	And the following rules of inference:

	\bigskip
	\noindent
	\begin{bprooftree}
		\AxiomC{$M = M'$ }
		\LeftLabel{$(\mu)$ \quad}
		\UnaryInfC{$NM = NM'$}
	\end{bprooftree}
	\begin{bprooftree}
            \hskip 40pt
		\AxiomC{$M = N$}
		\AxiomC{$N = P$}
		\LeftLabel{$(\tau)$}
		\BinaryInfC{$M = P$}
	\end{bprooftree}
	
	\bigskip
	\noindent
	\begin{bprooftree}
		\AxiomC{$M = M'$ }
		\LeftLabel{$(\nu)$ \quad}
		\UnaryInfC{$MN = M'N$}
	\end{bprooftree}
	\begin{bprooftree}
            \hskip 40pt
		\AxiomC{$M = N$}
		\LeftLabel{$(\sigma)$}
		\UnaryInfC{$N = M$}
	\end{bprooftree}
	
	\bigskip
	\noindent
	\begin{bprooftree}
		\AxiomC{$M = M'$ }
		\LeftLabel{$(\xi)$ \quad}
		\UnaryInfC{$\lambda x.M= \lambda x.M'$}
	\end{bprooftree}
	\begin{bprooftree}
        \hskip 32pt
	\AxiomC{$Mx = Nx$ }
	\LeftLabel{$(\zeta)$ \quad}
	\UnaryInfC{$M = N$}
	\end{bprooftree}
	\medskip
	
	If $M = N$ is provable in $\lambda\beta\eta$, then we say that  $\lambda\beta\eta \vdash M = N$.
	
\end{definition}

\begin{definition}[\cite{lambda}]
	$P$ is $\beta$-equal or $\beta$-convertible to $Q$  (notation $P=_\beta Q$)
	iff $Q$ is obtained from $P$ by a finite (perhaps empty)  series of $\beta$-contractions
	and reversed $\beta$-contractions  and changes of bound variables.  That is,
	$P=_\beta Q$ iff \textbf{there exist} $P_0, \ldots, P_n$ ($n\geq 0$)  such that
	$P_0\equiv P$,  $P_n\equiv Q$,
	$(\forall i\leq n-1) (P_i\triangleright_{1\beta}P_{i+1}  \mbox{ or }P_{i+1}\triangleright_{1\beta}P_i  \mbox{ or } P_i\equiv_\alpha P_{i+1}).$
\end{definition}
\noindent (Note that equality has an \textbf{existential} force, which will show in the proof rules for the identity type.)

The same happens with $\lambda\beta\eta$-equality:
\begin{definition}[$\lambda\beta\eta$-equality \citep{lambda}]
	The equality-relation determined by the theory $\lambda\beta\eta$ is called
	$=_{\beta\eta}$; that is, we define
	\[M=_{\beta\eta}N\quad\Leftrightarrow\quad\lambda\beta\eta\vdash M=N.\]
\end{definition}

\begin{example}
	Take the term $M\equiv(\lambda x.(\lambda y.yx)(\lambda w.zw))v$. Then, it is $\beta\eta$-equal to $N\equiv zv$ because of the sequence:\\
	$(\lambda x.(\lambda y.yx)(\lambda w.zw))v, \quad  (\lambda x.(\lambda y.yx)z)v, \quad   (\lambda y.yv)z , \quad zv$\\
	which starts from $M$ and ends with $N$, and each member of the sequence is obtained via 1-step $\beta$- or $\eta$-contraction of a previous term in the sequence. In \cite{Art3} we can find the example that shows that this sequence can be This sequence can be seen as a path.
\end{example}

\begin{example}\label{examplepath}
	The term $M\equiv(\lambda x.(\lambda y.yx)(\lambda w.zw))v$ is $\beta\eta$-equal to $N\equiv zv$ because of the sequence:\\
	$(\lambda x.(\lambda y.yx)(\lambda w.zw))v, \quad  (\lambda x.(\lambda y.yx)z)v, \quad   (\lambda y.yv)z , \quad zv$\\
	Now, taking this sequence into a path leads us to the following:\\

 The first is equal to the second based on the grounds:\\
	$\eta((\lambda x.(\lambda y.yx)(\lambda w.zw))v,(\lambda x.(\lambda y.yx)z)v)$\\
	The second is equal to the third based on the grounds:\\
	$\beta((\lambda x.(\lambda y.yx)z)v,(\lambda y.yv)z)$\\
	Now, the first is equal to the third based on the grounds:\\
	$\tau(\eta((\lambda x.(\lambda y.yx)(\lambda w.zw))v,(\lambda x.(\lambda y.yx)z)v),\beta((\lambda x.(\lambda y.yx)z)v,(\lambda y.yv)z))$\\
	Now, the third is equal to the fourth one based on the grounds:\\
	$\beta((\lambda y.yv)z,zv)$\\
	Thus, the first one is equal to the fourth one based on the grounds:\\
	$\tau(\tau(\eta((\lambda x.(\lambda y.yx)(\lambda w.zw))v,(\lambda x.(\lambda y.yx)z)v),\beta((\lambda x.(\lambda y.yx)z)v,(\lambda y.yv)z)),\beta((\lambda y.yv)z,zv)))$.
\end{example}

In \cite{Art3} we can find in example $2.4$ how a sequence can be constructed that establishes a $\lambda\beta\eta$-equality between two terms, and in example $3.1$ how this sequence can be seen as one way. 

The aforementioned theory establishes the equality between two $\lambda$-terms. Therefore, two terms are $\lambda\beta\eta$-equal if {\bf there exists} a sequence of applications of rules of definitional equality which builds
a proof of their equality in the theory of $\lambda\beta\eta$-equality. Since we are working with computational objects as terms of a type, we can consider the following definition:

\begin{definition}
	The equality theory of Martin L\"of's type theory has the following basic proof rules for the $\Pi$-type:
	
	\bigskip
	
	\noindent
	\begin{bprooftree}
		\hskip -2pt
		\alwaysNoLine
		\AxiomC{$N : A$}
		\AxiomC{$[x : A]$}
		\UnaryInfC{$M : B$}
		\alwaysSingleLine
		\LeftLabel{$(\beta$) \quad}
		\BinaryInfC{$(\lambda x.M)N = M[N/x] : B[N/x]$}
	\end{bprooftree}
	\begin{bprooftree}
		\hskip 8pt
		\alwaysNoLine
		\AxiomC{$[x : A]$}
		\UnaryInfC{$M = M' : B$}
		\alwaysSingleLine
		\LeftLabel{$(\xi)$ \quad}
		\UnaryInfC{$\lambda x.M = \lambda x.M' : (\Pi x : A)B$}
	\end{bprooftree}
	
	\bigskip
	
	\noindent
	\begin{bprooftree}
		\hskip -2pt
		\AxiomC{$M : A$}
		\LeftLabel{$(\rho)$ \quad}
		\UnaryInfC{$M = M : A$}
	\end{bprooftree}
	\begin{bprooftree}
		\hskip 90pt
		\AxiomC{$M = M' : A$}
		\AxiomC{$N : (\Pi x : A)B$}
		\LeftLabel{$(\mu)$ \quad}
		\BinaryInfC{$NM = NM' : B[M/x]$}
	\end{bprooftree}
	
	\bigskip
	
	\noindent
	\begin{bprooftree}
		\hskip -2pt
		\AxiomC{$M = N : A$}
		\LeftLabel{$(\sigma) \quad$}
		\UnaryInfC{$N = M : A$}
	\end{bprooftree}
	\begin{bprooftree}
		\hskip 74pt
		\AxiomC{$N : A$}
		\AxiomC{$M = M' : (\Pi x : A)B$}
		\LeftLabel{$(\nu)$ \quad}
		\BinaryInfC{$MN = M'N : B[N/x]$}
	\end{bprooftree}
	
	\bigskip
	
	\noindent
	\begin{bprooftree}
		\hskip -2pt
		\AxiomC{$M = N : A$}
		\AxiomC{$N = P : A$}
		\LeftLabel{$(\tau)$ \quad}
		\BinaryInfC{$M = P : A$}
	\end{bprooftree}
	\begin{bprooftree}
		\hskip 1pt
		\AxiomC{$M: (\Pi x : A)B$}
		\LeftLabel{$(\eta)$ \quad}
		\UnaryInfC{$(\lambda x.Mx) = M: (\Pi x : A)B$}
	\end{bprooftree}
	
	\bigskip
	
\end{definition}

Where $x \notin FV(M)$ in $\eta-rule$. Now, we are finally able to formally define computational paths:

\begin{definition}
	Let $a$ and $b$ be elements of a type $A$. We say that $s$ is an \emph {computational path} from $ a $ to $ b $, if $s$ is a composite of a sequence of rewrites (each rewrite is an application of the inference rules of the equality theory of types theory or is a change of bounded variables) that when applied to the term $a$, we get the term $b$ . We denote that by $a =_{s} b$.
\end{definition}

As we have seen in \emph{example \ref{examplepath}}, composition of rewrites are applications of the rule $\tau$. Since change of bound variables is possible, each term is considered up to $\alpha$-equivalence.

\subsection{Equality equations}

Using the axioms of \emph{$\lambda\beta\eta$-equality} and the equality theory of Martin L\"of's type theory, we can show that computational paths establishes the three fundamental equations of equality: The transitivity (\textbf{$\tau$}),  Reflexivity  (\textbf{$\rho$}) and Symmetry (\textbf{$\sigma$}) which can be seen below respectively.

\bigskip

\begin{bprooftree}
\AxiomC{$a =_{t} b : A$}
\AxiomC{$b =_{u} c : A$}
\RightLabel{\large\textit{\textbf{($\tau$)}}}
\BinaryInfC{$a =_{\tau(t,u)} c : A$}
\end{bprooftree}
\begin{bprooftree}
\AxiomC{$a : A$}
\RightLabel{\textit\large\textit{\textbf{($\rho$)}}}
\UnaryInfC{$a =_{\rho} a : A$}
\end{bprooftree}
\begin{bprooftree}
\AxiomC{$a =_{t} b : A$}
\RightLabel{\textit\large\textit{\textbf{($\sigma$)}}}
\UnaryInfC{$b =_{\sigma(t)} a : A$}
\end{bprooftree}

\bigskip
\begin{bprooftree}
\AxiomC{$a =_{t} b : A$}
\RightLabel{\textit{s}}
\UnaryInfC{$b =_{\sigma(t)} a : A$}
\end{bprooftree}

\subsection{Identity type}

We have said that it is possible to formulate the identity type using computational paths. As we have seen, the best way to define any formal entity of type theory is by a set of natural deductions rules. Thus, we define our path-based approach as the following set of rules: 

\begin{itemize}
	
\item Formation and Introduction rules \cite{Ruy1}, \cite{Art3}:
	\begin{center}
		\begin{bprooftree}
			\AxiomC{$A$ type}
			\AxiomC{$a : A$}
			\AxiomC{$b : A$}
			\RightLabel{$Id - F$}
			\TrinaryInfC{$Id_{A}(a,b)$ type}
		\end{bprooftree}
  \begin{bprooftree}
			\AxiomC{$a =_{s} b : A$}
			\RightLabel{$Id - I$}
			\UnaryInfC{$s(a,b) : Id_{A}(a,b)$}
		\end{bprooftree}
	
	\bigskip
	
		\begin{bprooftree}
			\AxiomC{$a =_{s} b : A$}
			\RightLabel{$Id - I$}
			\UnaryInfC{$s(a,b) : Id_{A}(a,b)$}
		\end{bprooftree}
	\end{center}
	
	One can notice that our formation rule is exactly equal to the traditional identity type. From terms $a, b : A$, one can form that is inhabited only if there is a proof of equality between those terms, i.e., $Id_{A}(a,b)$.
	
	The difference starts with the introduction rule. In our approach, one can notice that we do not use a reflexive constructor $r$. In other words, the reflexive path is not the main building block of our identity type. Instead, if we have a computational path $a =_{s} b : A$, we introduce $s(a,b)$ as a term of the identity type. That way, one should see $s(a,b)$ as a sequence of rewrites and substitutions (i.e., a computational path) which would have started from $a$ and arrived at $b$
	
	\medskip 
	
	\item Elimination rule \cite{Ruy1}, \cite{Art3}:
	
	\begin{center}
		\begin{bprooftree}
			\alwaysNoLine
			\AxiomC{$m : Id_{A}(a,b)$ }
			\AxiomC{$[a =_{g} b : A]$}
			\UnaryInfC{$h(g) : C$}
			\alwaysSingleLine
			\RightLabel{$Id - E$}
			\BinaryInfC{$REWR(m, \acute{g}.h(g)) : C$}
		\end{bprooftree}
	\end{center}
	
	Let's recall the notation being used. First, one should see $h(g)$ as a functional expression $h$ which depends on $g$. Also, one should notice the use of `$\acute{\ }$' in $\acute{g}$. One should see `$\acute{\ }$' as an abstractor that binds the occurrences of the variable $g$ introduced in the local assumption $[a =_{g} b : A]$ as a kind of {\em Skolem-type\/} constant denoting the {\em reason\/} why $a$ was assumed to be equal to $b$.
	
	We also introduce the constructor $REWR$. In a sense, it is similar to the constructor $J$ of the original elimination operator of MLTT, since both arise from the elimination rule of the identity type. The behavior of $REWR$ is simple. If from a computational path $g$ that establishes the equality between $a$ and $b$ one can construct $h(g) : C$, then if we also have this equality established by a term $C$, we can put together all this information in $REWR$ to construct $C$, eliminating the type $Id_{A}(a,b)$ in the process. The idea is that we can substitute $g$ for $m$ in $\acute{g}.h(g)$, resulting in $h(m/g) : C$. This behavior is established next by the reduction rule.\\
	
	\item Reduction rule 	\cite{Ruy1}, \cite{Art3}:
	
	\begin{center}
		\begin{bprooftree}
			\AxiomC{$a =_{m} b : A$}
			\RightLabel{$Id - I$}
			\UnaryInfC{$m(a,b) : Id_{A}(a,b)$}
			\alwaysNoLine
			\AxiomC{$[a =_{g} b : A]$}
			\UnaryInfC{$h(g) : C$}
			\alwaysSingleLine
			\RightLabel{$Id - E$ \quad $\rhd_\beta$}
			\BinaryInfC{$REWR(m, \acute{g}.h(g)) : C$}
		\end{bprooftree}
		\begin{bprooftree}
			\AxiomC{$[a =_{m} b : A]$}
			\alwaysNoLine
			\UnaryInfC{$h(m/g):C$}
		\end{bprooftree}
	\end{center}
	
	\item Induction rule: 
	
	\begin{center}
		\begin{bprooftree}
			\AxiomC{$e : Id_{A}(a,b)$}
			\AxiomC{$[a =_{t} b : A]$}
			\RightLabel{$Id - I$}
			\UnaryInfC{$t(a, b) : Id_{A}(a, b)$}
			\RightLabel{$Id - E$ \quad  $\rhd_{\eta}$ \quad $e : Id_{A}(a,b)$}
			\BinaryInfC{$REWR(e, \acute{t}.t(a,b)) : Id_{A}(a,b)$}
		\end{bprooftree}
	\end{center}
	
\end{itemize}

Our introduction and elimination rules reassure the concept of equality as an \textbf{existential force}. In the introduction rule, we encapsulate the idea that a witness of a identity type $Id_{A}(a,b)$ only exists if there exist a computational path establishing the equality of $a$ and $b$. Also, one can notice that elimination rule is similar to the elimination rule of the existential quantifier.

\subsection{Path-based examples}

The objective of this subsection is to show how to use in practice the rules that we have just defined. The idea is to show construction of terms of some important types. The constructions that we have chosen to build are the reflexive, transitive and symmetric type of the identity type. Those were not random choices. The main reason is the fact that reflexive, transitive and symmetric types are essential to the process of building a groupoid model for the identity type \cite{hofmann1}. As we shall see, these constructions come naturally from simple computational paths constructed by the application of axioms of the equality of type theory.

In the constructions of terms of some important types, we chose to construct the reflexive, transitive and symmetric types of the identity type. These were not random choices. The main reason is the fact that reflexive, transitive and symmetric types are essential for the process of building a groupoid model for the identity type \cite{hofmann1}. 

These constructions come naturally from simple computational paths constructed by the application of axioms of the equality of type theory.  The process of building a term of some type is a matter of finding the right reason. In the case of $J$, the reason is the correct $x,y : A$ and $z : Id_{A}(a,b)$ that generates the adequate $C(x,y,z)$. In our approach, the reason is the correct path $a =_{g} b$ that generates the adequate $g(a,b) : Id(a,b)$. A proof of how to construct reflexivity, transitivity and symmetry can be found subsection 3.1 in \cite{Art3}.

One could find strange the fact that we need to prove the reflexivity. Nevertheless, just remember that our approach is not based on the idea that reflexivity is the base of the identity type. As usual in type theory, a proof of something comes down to a construction of a term of a type. In this case, we need to construct a term of type $\Pi_{(a : A)}Id_{A}(a,a)$. The reason is extremely simple: from a term $a : A$, we obtain the computational path $a =_{\rho} a : A$ \cite{Art3}:

\begin{center}
	\begin{bprooftree}
		\AxiomC{$[a : A]$}
		\UnaryInfC{$a =_{\rho} a : A$}
		\RightLabel{$Id - I$}
		\UnaryInfC{$\rho(a,a) : Id_{A}(a,a)$}
		\RightLabel{$\Pi-I$}
		\UnaryInfC{$\lambda a.\rho(a,a) : \Pi_{(a : A)}Id_{A}(a,a)$}
	\end{bprooftree}
\end{center}

\subsubsection{Symmetry}

The second proposed construction is the symmetry. Our objective is to obtain a term of type $\Pi_{(a : A)}\Pi_{(b : A)}(Id_{A}(a,b)\rightarrow Id_{A}(b,a))$.

We construct a proof using computational paths. As expected, we need to find a suitable reason. Starting from $a =_{t} b$, we could look at the axioms of \emph{definition 4.1} to plan our next step. One of those axioms makes the symmetry clear: the $\sigma$ axiom. If we apply $\sigma$, we will obtain $b =_{\sigma(t)} a$. From this, we can then infer that $Id_A$ is inhabited by $(\sigma(t))(b,a)$. Now, it is just a matter of applying the elimination \cite{Art3}:

\begin{center}
        \resizebox{0.9\hsize}{!}{	\begin{bprooftree}
		\alwaysNoLine
		\AxiomC{$[a:A] \quad [b:A]$}
		\UnaryInfC{$[p(a,b) : Id_{A}(a,b)]$}
		\alwaysSingleLine
		\AxiomC{[$a =_{t} b : A$]}
		\UnaryInfC{$b =_{\sigma(t)} a : A$}
		\RightLabel{$Id - I$}
		\UnaryInfC{$(\sigma(t))(b,a) : Id_{A}(b,a)$}
		\RightLabel{$Id - E$}
		\BinaryInfC{$REWR(p(a,b),\acute{t}.(\sigma(t))(b,a)) : Id_{A}(b,a)$}
		\RightLabel{$\rightarrow - I$}
		\UnaryInfC{$\lambda p.REWR(p(a,b), \acute{t}.(\sigma(t))(b,a)) : Id_{A} (a,b) \rightarrow Id_{A}(b,a)$}
		\RightLabel{$\Pi-I$}
	\UnaryInfC{$\lambda b. \lambda p.REWR(p(a,b),\acute{t}.(\sigma(t))(b,a)) :  \Pi_{(b : A)}(Id_{A} (a,b) \rightarrow Id_{A}(b,a))$}
	\RightLabel{$\Pi-I$}
		\UnaryInfC{$\lambda a.\lambda b. \lambda p.REWR(p(a,b), \acute{t}.(\sigma(t))(b,a)) :  \Pi_{(a : A)}\Pi_{(b : A)}(Id_{A} (a,b) \rightarrow Id_{A}(b,a))$}
	\end{bprooftree}}
\end{center}

\subsubsection{Transitivity}
The third and last construction will be the transitivity. Our objective is to obtain a term of type 
 $\Pi_{(a : A)}\Pi_{(b : A)}\Pi_{(c : A)} (Id_{A}(a,b) \rightarrow Id_{A}(b,c) \rightarrow Id_{A}(a,c))$.

To build our path-based construction, the first step, as expected, is to find the reason. Since we are trying to construct the transitivity, it is natural to think that we should start with paths $a =_{t} b$ and $b =_{u} c$ and then, from these paths, we should conclude that there is a path $z$ that establishes that $a =_{z} c$. To obtain $z$, we could try to apply the axioms of \emph{definition 4.1}. Looking at the axioms, one is exactly what we want: the axiom $\tau$. If we apply $\tau$ to  $a =_{t} b$ and $b =_{u} c$, we will obtain a new path $\tau(t,u)$ such that $a = _{\tau(t,u)} c$. Using that construction as the reason, we obtain the following term \cite{Art3}:

\begin{center}
 \resizebox{1.1\hsize}{!}{
		\begin{bprooftree}
			\alwaysNoLine
			\AxiomC{$[a:A] \quad [b:A]$}
			\UnaryInfC{$[w(a,b) : Id_{A}(a,b)]$}
			\alwaysNoLine
			\AxiomC{$[c:A]$}
			\UnaryInfC{$[s(b,c) : Id_{A}(b,c)]$}
			\alwaysSingleLine
			\AxiomC{$[a =_{t} b:A]$}
			\AxiomC{$[b =_{u} c:A]$}
			\BinaryInfC{$a =_{\tau(t,u)} c:A$}
			\RightLabel{$Id - I$}
			\UnaryInfC{$(\tau (t,u))(a,c) : Id_{A}(a,c)$}
			\RightLabel{$Id - E$}
			\BinaryInfC{$REWR(s(b,c),\acute{u}(\tau (t,u))(a,c)) : Id_{A}(a,c)$}
			\RightLabel{$Id - E$}
			\BinaryInfC{$REWR(w(a,b),\acute{t}REWR(s(b,c),\acute{u}(\tau (t,u))(a,c))) : Id_{A}(a,c)$}
			\RightLabel{$\rightarrow - I$}
			\UnaryInfC{$\lambda s.REWR(w(a,b),\acute{t}REWR(s(b,c),\acute{u}(\tau (t,u))(a,c))) : Id_{A}(b,c) \rightarrow Id_{A}(a,c)$}
			\RightLabel{$\rightarrow - I$}
			\UnaryInfC{$\lambda w.\lambda s.REWR(w(a,b),\acute{t}REWR(s(b,c),\acute{u}(\tau (t,u))(a,c))) : Id_{A}(a,b) \rightarrow Id_{A}(b,c) \rightarrow Id_{A}(a,c)$}
			\RightLabel{$\Pi-I$}
			\UnaryInfC{$\lambda c.\lambda w.\lambda s.REWR(w(a,b),\acute{t}REWR(s(b,c),\acute{u}(\tau (t,u))(a,c))) :  \Pi_{(c : A)}(Id_{A}(a,b) \rightarrow Id_{A}(b,c) \rightarrow Id_{A}(a,c))$}
			\RightLabel{$\Pi-I$}
			\UnaryInfC{$\lambda b. \lambda c.\lambda w.\lambda s.REWR(w(a,b),\acute{t}REWR(s(b,c),\acute{u}(\tau (t,u))(a,c))) :  \Pi_{(b : A)}\Pi_{(c : A)}(Id_{A}(a,b) \rightarrow Id_{A}(b,c) \rightarrow Id_{A}(a,c))$}
			\RightLabel{$\Pi-I$}
			\UnaryInfC{$\lambda a. \lambda b. \lambda c.\lambda w.\lambda s.REWR(w(a,b),\acute{t}REWR(s(b,c),\acute{u}(\tau (t,u))(a,c))) :   \Pi_{(a : A)}\Pi_{(b : A)}\Pi_{(c : A)}(Id_{A}(a,b) \rightarrow Id_{A}(b,c) \rightarrow Id_{A}(a,c))$}
		\end{bprooftree}}
	\end{center}
\medskip
As one can see, each step is just straightforward applications of introduction, elimination rules and abstractions. The only idea behind this construction is just the simple fact that the axiom $\tau$ guarantees the transitivity of paths.

\subsection{Term rewriting system}

As we have just shown, a computational path establishes when two terms of the same type are equal. From the theory of computational paths, an interesting case arises. Suppose we have a path $s$ that establishes that $a =_{s} b : A$ and a path $t$ that establishes that $a =_{t} b : A$. Consider that $s$ and $t$ are formed by distinct compositions of rewrites. Is it possible to conclude that there are cases that $s$ and $t$ should be considered equivalent? The answer is \emph{yes}. Consider the following example \cite{Arttese}:



	

\begin{example}
	\noindent \normalfont Consider a path $a =_{t} b : A$. Applying the symmetry, one ends up with $b =_{\sigma(t)} a : A$. One can take those two paths and apply the transitivity, ending up with $a =_{\tau(t,\sigma(t))} a$. Since the path $\tau$ is the inverse of the $\sigma(\tau)$, the composition of those two paths should be equivalent to the reflexive path. Thus, $\tau(t,\sigma(t))$ should be reduced to $\rho$.
\end{example}	
	

The example is a simple and straightforward case and shows that different paths should be considered equal if one is just a redundant form of the other.
Since the equality theory has a total of 7 axioms, the possibility of combinations that could generate redundancies are high. Fortunately, all possible redundancies have been thoroughly mapped out by \cite{Anjo1} and \cite{Ruy1}. There they established a system known as $LND_{EQ}-TRS$, which establishes all redundancies and creates rules that resolve them, totaling $39$ rules. All this $39$ rules are arranged in the appendix.

 When we talk about these kind of systems, two questions arise: Every computational path has a normal form? And if a computational path has a normal form, is it unique? To show that it has a normal form, one has to prove that every computational path terminates, i.e., after a finite number of rewrites, one will end up with a path that does not have any additional reduction. To show that it is unique, one needs to show that the system is confluent. In other words, if one has a path with $2$ or more reductions, one needs to show that the choice of the rewrite rule does not matter. In the end, one will always obtain the same end-path without any redundancies. Full proof of these requirements can be found at \cite{Anjo1},\cite{Ruy2},\cite{Ruy3},\cite{RuyAnjolinaLivro} and \cite{dershowitz}.

\subsection{Rewrite equality}

From the $39$  $LND_{EQ}-TRS$ rules, we have the following definition:

\begin{definition}[Rewrite Rule \citep{Art3}]
 An $rw$-rule is any of the rules defined in $LND_{EQ}-TRS$.
\end{definition}

Similarly to the $\beta$-reduction of $\lambda$-calculus, we have a definition for rewrite reduction:

\begin{definition}[Rewrite reduction \citep{Art3}]
	Let $s$ and $t$ be computational paths. We say that $s \rhd_{1rw} t$ (read as: $s$ $rw$-contracts to $t$) iff we can obtain $t$ from $s$ by an application of only one $rw$-rule. If $s$ can be reduced to $t$ by finite number of $rw$-contractions, then we say that $s \rhd_{rw} t$ (read as $s$ $rw$-reduces to $t$).
	
\end{definition}

We also have rewrite contractions and equality:

\begin{definition}[Rewrite contraction and equality \citep{Art3}]
Let $s$ and $t$ be computational paths. We say that $s =_{rw} t$ (read as: $s$ is $rw$-equal to $t$) iff $t$ can be obtained from $s$ by a finite (perhaps empty) series of $rw$-contractions and reversed $rw$-contractions. In other words, $s =_{rw} t$ iff there exists a sequence $R_{0},....,R_{n}$, with $n \geq 0$, such that $(\forall i \leq n - 1) (R_{i}\rhd_{1rw} R_{i+1}$ or $R_{i+1} \rhd_{1rw} R_{i})$, $R_{0} \equiv s$ and \quad $R_{n} \equiv t$.
\end{definition}

Thus, as result we have the fact that rewrite equality is an equivalence relation \cite{Art3}:
 
\begin{proposition}\label{proposition3.7} Rewrite equality is transitive, symmetric and reflexive.
\end{proposition}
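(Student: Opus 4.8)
The plan is to unfold the definition of $=_{rw}$ and verify the three properties directly, in exactly the way one checks that $=_{\beta\eta}$ is an equivalence relation. Recall that $s =_{rw} t$ means there is a finite sequence $R_0, \dots, R_n$ with $n \ge 0$, $R_0 \equiv s$, $R_n \equiv t$, and, for every $i \le n-1$, either $R_i \rhd_{1rw} R_{i+1}$ or $R_{i+1} \rhd_{1rw} R_i$. Each of the three properties amounts to one elementary manipulation of such sequences.

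For reflexivity, given a computational path $s$ I would take $n = 0$ and the singleton sequence $R_0 \equiv s$; the defining condition on consecutive pairs is then vacuous, so $s =_{rw} s$. This is precisely why the definition is stated with $n \ge 0$. For symmetry, assume $s =_{rw} t$ is witnessed by $R_0, \dots, R_n$ and put $R'_j := R_{n-j}$ for $0 \le j \le n$, so that $R'_0 \equiv t$ and $R'_n \equiv s$. For each $j \le n-1$ the pair $(R'_j, R'_{j+1}) = (R_{n-j}, R_{n-j-1})$ satisfies ``$R_{n-j-1} \rhd_{1rw} R_{n-j}$ or $R_{n-j} \rhd_{1rw} R_{n-j-1}$'', which is exactly the required disjunction, since that disjunction is symmetric in its two arguments. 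Hence $t =_{rw} s$.

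For transitivity, assume $s =_{rw} t$ via $R_0, \dots, R_n$ and $t =_{rw} u$ via $Q_0, \dots, Q_m$; note $R_n \equiv t \equiv Q_0$. I would set $P_i := R_i$ for $0 \le i \le n$ and $P_i := Q_{i-n}$ for $n \le i \le n+m$; the two clauses agree at $i = n$ because $R_n \equiv Q_0$, so $P_0, \dots, P_{n+m}$ is a well-defined sequence with $P_0 \equiv s$ and $P_{n+m} \equiv u$. Each consecutive pair of the $P$-sequence is either a consecutive pair of the $R$-sequence or of the $Q$-sequence, hence satisfies the disjunction, and therefore $s =_{rw} u$.

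No genuine obstacle is expected: this is the textbook fact that the reflexive–symmetric–transitive closure of the one-step relation $\rhd_{1rw}$ is an equivalence relation, and the argument is identical in form to the one for $=_{\beta\eta}$ recalled earlier in the paper. The only points requiring any care are the index bookkeeping in the concatenation used for transitivity and the observation that the empty list of contractions ($n = 0$) is permitted by the definition, which is exactly what delivers reflexivity.
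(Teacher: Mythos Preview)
Your argument is correct and is essentially the same as the paper's: the paper simply observes that $=_{rw}$ is, by definition, the reflexive--symmetric--transitive closure of $\rhd_{1rw}$, and you have spelled out in detail exactly why such a closure is an equivalence relation. The only difference is the level of explicitness, not the underlying idea.
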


\begin{proof}
	Comes directly from the fact that $rw$-equality is the transitive, reflexive and symmetric closure of $rw$.
\end{proof}

Rewrite reduction and equality play fundamental roles in the groupoid model of a type based on computational paths, as we are going to see in the sequel.

\subsection{LND$_{EQ}$-TRS(2)}

We know there are redundancies which are resolved by a system called $LND_{EQ}-TRS$.  In fact, since these axioms just define an equality theory for type theory, one can specify and say that these are redundancies of the equality of type theory. As we mentioned, the $LND_{EQ}-TRS$ has a total of $39$ rules \cite{Anjo1},\cite{Ruy1}. Since the $rw$-equality is based on the rules of  $LND_{EQ}-TRS$,  one can just imagine the high number of redundancies that $rw$-equality could cause. 

In fact, a thorough study of all the redundancies caused by these rules led to the work done in \cite{Arttese}, that only interested in the redundancies caused by the fact that $rw$-equality is transitive, reflexive and symmetric with the addition of only one specific $rw_{2}$-rule. 

This way up, was created a system, called $LND_{EQ}-TRS_{2}$, that resolves all the redundancies caused by $rw$-equality (the same way that $LND_{EQ}-TRS$ resolves all the redundancies caused by equality), creating new rewrite rules denoted by $tr_{2}$, $tsr_{2}$, $trr_{2}$, $tlr_{2}$, $sr_{2}$, $ss_{2}$ and $ tt_{2}$. Since $rw$-equality is just a sequence of $rw$-rules (also similar to equality, since equality is just a computational path, i.e., a sequence of identifiers), then we could put a name on these sequences.  The $LND_{EQ}-TRS_{2}$ system can be seen at \cite{Arttese} and the only difference is that instead of having $rw$-rules and $rw$-equality, we have $rw_{2}$-rules and $rw_{2}$-equality.

There is an important rule specific to this system. It stems from the fact that transitivity of reducible paths can be reduced in different ways, but generating the same result. For example, consider the simple case of $\tau(s,t)$ and consider that it is possible to reduce $s$ to $s'$ and $t$ to $t'$. There is two possible $rw$-sequences that reduces this case: The first one is $\theta: \tau(s,t) \rhd_{1rw} \tau(s',t) \rhd_{1rw} \tau(s',t')$ and the second $\theta': \tau(s,t) \rhd_{1rw} \tau(s,t') \rhd_{1rw} \tau(s',t')$. Both $rw$-sequences obtained the same result in similar ways, the only difference being the choices that have been made at each step. Since the variables, when considered individually,  followed the same reductions, these $rw$-sequences should be considered redundant relative to each other and, for that reason, there should be $rw_{2}$-rule that establishes this reduction. This rule is called \emph{independence of choice} and is denoted by $cd_{2}$. Since we already understand the necessity of such a rule, we can define it formally:

\begin{definition}[Independence of choice \cite{Art3}]
	Let $\theta$ and $\phi$ be $rw$-equalities expressed by two $rw$-sequences: $\theta: \theta_{1},...,\theta_{n}$, with $n \geq 1$, and $\phi: \phi_{1},...,\phi_{m}$, with $m \geq 1$. Let $T$ be the set of all possible $rw$-equalities from $\tau(\theta_{1},\phi_{1})$ to $\tau(\theta_{n},\theta_{m})$ described by the following process: $t \in T$ is of the form  $\tau(\theta_{l_{1}},\phi_{r_{1}}) \rhd_{1rw} \tau(\theta_{l_{2}},\phi_{r_{2}}) \rhd_{1rw} ... \rhd_{1rw} \tau(\theta_{l_{x}},\phi_{r_{y}})$, with $l_{1} = 1, r_{1} = 1$,  $l_{x} = n, r_{y} = m$ and $l_{i + 1} = 1 + l_{i}$ and $r_{i + 1} = r_{i}$ or $l_{i + 1} = l_{i}$ and $r_{i + 1} = 1 + r_{i}$. The independence of choice, denoted by $cd_{2}$, is defined as the rule of $LND_{EQ}-TRS_{2}$ that establishes the equality between any two different terms of $T$. In other words, if $x,y \in T$ and $x \neq y$, then $x =_{cd_{2}} y$ and $y =_{cd_{2}} x$.
	
\end{definition}

Analogously to the $rw$-equality, $rw_{2}$-equality is also an equivalence relation \cite{Art3}. The proof of this result, together with all relevant results, including the proofs of the new $7$ news rules of the $LND_{EQ}-TRS_{2}$ system, can be found in the publications \cite{Ruy1} and \cite{Art3}.

\begin{proposition}
	$rw_{2}$-equality is transitive, symmetric and reflexive.
\end{proposition}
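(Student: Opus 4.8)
The plan is to mirror exactly the argument used for Proposition~\ref{proposition3.7}, since $rw_2$-equality is set up as the analogue, one level up, of $rw$-equality. As the text already notes, ``instead of having $rw$-rules and $rw$-equality, we have $rw_2$-rules and $rw_2$-equality'', and all the concepts defined for $LND_{EQ}\text{-}TRS$ transfer verbatim. So the first step is to write down the transferred definitions explicitly: an $rw_2$-contraction $\rhd_{1rw_2}$ is a single application of one of the $rw_2$-rules ($tr_2$, $tsr_2$, $trr_2$, $tlr_2$, $sr_2$, $ss_2$, $tt_2$, $cd_2$); $\rhd_{rw_2}$ is its finite iteration; and $=_{rw_2}$ is defined by: $s =_{rw_2} t$ iff there is a sequence $R_0, \ldots, R_n$ with $n \geq 0$, $R_0 \equiv s$, $R_n \equiv t$, and for each $i \leq n-1$ either $R_i \rhd_{1rw_2} R_{i+1}$ or $R_{i+1} \rhd_{1rw_2} R_i$. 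In other words, $=_{rw_2}$ is by construction the reflexive--symmetric--transitive closure of $\rhd_{1rw_2}$.

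Given that characterization, each of the three properties is immediate. For \emph{reflexivity}, given any $rw$-equality $s$, take the empty sequence $R_0 \equiv s$ (the case $n = 0$), which witnesses $s =_{rw_2} s$. For \emph{symmetry}, suppose $s =_{rw_2} t$ via $R_0, \ldots, R_n$; then the reversed list $R_n, \ldots, R_0$ witnesses $t =_{rw_2} s$, because at each adjacent pair the roles of ``contracts to'' and ``is contracted from'' simply swap, and the defining disjunction is symmetric in $R_i$ and $R_{i+1}$. For \emph{transitivity}, suppose $s =_{rw_2} t$ via $R_0, \ldots, R_n$ and $t =_{rw_2} u$ via $R'_0, \ldots, R'_m$ with $R_n \equiv t \equiv R'_0$; then the concatenation $R_0, \ldots, R_n, R'_1, \ldots, R'_m$ is a legitimate witnessing sequence for $s =_{rw_2} u$, since the junction at $R_n \equiv R'_0$ satisfies the required condition on either side.

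There is essentially no obstacle here: the only thing one must be careful about is that the ``one level up'' transfer of the $LND_{EQ}\text{-}TRS$ machinery to $LND_{EQ}\text{-}TRS_2$ is genuinely definitional, so that $=_{rw_2}$ really is the closure just described and not something more subtle. Once that is granted — and the preceding paragraphs of the paper grant it — the proof is a one-line remark, exactly as in Proposition~\ref{proposition3.7}: $rw_2$-equality is the transitive, reflexive and symmetric closure of $rw_2$, hence it is transitive, reflexive and symmetric. The slightly more delicate rule $cd_2$ plays no special role in this argument, because reflexivity, symmetry and transitivity of the closure do not depend on which particular $rw_2$-rules generate $\rhd_{1rw_2}$.
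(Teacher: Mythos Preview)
Your proposal is correct and takes essentially the same approach as the paper: the paper's proof is literally ``Analogous to Proposition~\ref{proposition3.7}'', and you have simply unfolded what that analogy means, spelling out that $=_{rw_2}$ is defined as the reflexive--symmetric--transitive closure of $\rhd_{1rw_2}$ and then verifying the three properties directly. Your version is more detailed than the paper's one-line remark, but the underlying argument is identical.
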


\begin{proof}
	Analogous to Proposition \ref{proposition3.7}.
\end{proof}

\subsection{Functoriality}
We want to show that functions preserve equality \cite{hott}. We shall omit the proof of the results in this section. In these cases, where a proof has been omitted, the reference and location in the text where such proof can be checked are included.

\begin{lemma}
	The type $\Pi_{(x,y : A)}\Pi_{(f: A \rightarrow B)} (Id_{A}(x,y) \rightarrow Id_{B}(f(x),f(y)))$ is inhabited.
\end{lemma}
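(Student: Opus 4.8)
The plan is to follow the same recipe used above for symmetry and transitivity: strip off the quantifiers and the arrow with introduction rules, apply the identity-type elimination $Id-E$ to the hypothesis of type $Id_{A}(x,y)$, and feed it the \emph{reason} obtained from the congruence axiom $(\mu)$ of the equality theory of type theory. Concretely, I would first assume $x:A$, $y:A$, $f:A\rightarrow B$ and $p:Id_{A}(x,y)$, so that the task reduces to constructing a term of $Id_{B}(f(x),f(y))$; four successive abstractions ($\rightarrow-I$ on $p$, then $\Pi-I$ on $f$, then on $y$, then on $x$) will finish the job.

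To obtain the reason needed by $Id-E$, I would open a local assumption $x =_{g} y : A$ (a computational path) and apply the rule $(\mu)$ to it together with $f:A\rightarrow B$. Since the codomain $B$ of $f$ does not depend on the argument, this instance of $(\mu)$ carries no substitution and yields the computational path $f(x) =_{\mu(g)} f(y) : B$. By $Id-I$ we then get $(\mu(g))(f(x),f(y)) : Id_{B}(f(x),f(y))$, so the abstracted reason is $\acute{g}.(\mu(g))(f(x),f(y))$. Applying $Id-E$ to $p:Id_{A}(x,y)$ with this reason gives
\[ REWR(p,\ \acute{g}.(\mu(g))(f(x),f(y))) : Id_{B}(f(x),f(y)), \]
and the four abstractions mentioned above produce the inhabitant
\[ \lambda x.\lambda y.\lambda f.\lambda p.\ REWR(p,\ \acute{g}.(\mu(g))(f(x),f(y))) : \Pi_{(x,y:A)}\Pi_{(f:A\rightarrow B)}(Id_{A}(x,y)\rightarrow Id_{B}(f(x),f(y))). \]

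There is no genuine obstacle here; the entire content is the observation that the axiom $(\mu)$ — congruence of equality under application of a fixed function — supplies exactly the path that the elimination rule wants as its reason, in precise analogy with the role played by $(\sigma)$ in the symmetry derivation and $(\tau)$ in the transitivity derivation. The only point deserving a word of care is the dependency structure: because we are working with the non-dependent function space $A\rightarrow B$, the instance of $(\mu)$ we invoke has trivial codomain substitution and no variable-capture issue arises; the dependent analogue (an ``$\mathsf{apd}$''-style statement over a family $B:A\rightarrow \mathsf{Type}$) would instead require transport along the path, but that is not what is being claimed here.
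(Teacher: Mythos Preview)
Your proposal is correct and is essentially the same as the paper's own proof: assume a path $x=_{s}y$, apply the congruence axiom $(\mu)$ with $f$ to obtain $f(x)=_{\mu_{f}(s)}f(y)$, introduce $Id_{B}$, eliminate against the given $p:Id_{A}(x,y)$ via $REWR$, and abstract. The only differences are cosmetic (the paper writes $\mu_{f}(s)$ and uses $\lambda s$ rather than $\acute{g}$ as the binder in $REWR$), and your closing remark about the dependent case correctly anticipates the transport-based treatment that the paper develops immediately afterwards.
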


\begin{proof}
	It is a straightforward construction:
	
	\bigskip
	\resizebox{1.0\hsize}{!}{	\begin{bprooftree}
		\AxiomC{[$x =_{s} y : A$]}
		\AxiomC{[$f: A \rightarrow B$]}
		\BinaryInfC{$f(x) =_{\mu_{f}(s)} f(y) : B$}
		\UnaryInfC{$\mu_{f}(s)(f(x),f(y)) : Id_{B}(f(x),f(y))$}
		\AxiomC{$[p: Id_{A}(x,y)]$}
		\BinaryInfC{$REWR(p,\lambda s. \mu_{f}(s)(f(x),f(y))): Id_{B}(f(x),f(y))$}
		\UnaryInfC{$\lambda x. \lambda y. \lambda f. \lambda p.REWR(p,\lambda s. \mu_{f}(s)(f(x),f(y))):\Pi_{(x,y : A)}\Pi_{(f: A \rightarrow B)} (Id_{A}(x,y) \rightarrow Id_{B}(f(x),f(y)))$}	
	\end{bprooftree}}	
\end{proof}

\begin{lemma}\label{lemma2}
	For any functions $f: A \rightarrow B$ and $g: B \rightarrow C$ and paths $p: x =_{A} y$ and $q: y =_{A} z$, we have:
	
	\begin{enumerate}
		\item $\mu_{f}(\tau(p,q)) = \tau(\mu_{f}(p),\mu_{f}(q))$
		\item $\mu_{f}(\sigma(p)) = \sigma(\mu_{f}(p))$
		\item $\mu_{g}(\mu_{f}(p)) = \mu_{g \circ f}(p)$
		\item $\mu_{Id_{A}}(p) = p\label{rule42}$
	\end{enumerate}
\end{lemma}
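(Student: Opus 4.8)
The plan is to prove the four equations by imitating the corresponding lemma on the action of a function on a path \citet{hott}, using the elimination rule $Id$-$E$ (the $REWR$-constructor) in place of path induction and then discharging the resulting base cases by normalising the computational paths with $LND_{EQ}-TRS$. Every base case rests on one structural fact: $\mu_{h}$ sends a reflexive path to a reflexive path, $\mu_{h}(\rho) \rhd_{rw} \rho$, for any $h$ — applying $h$ to an empty sequence of rewrites yields an empty sequence of rewrites, and at the level of endpoints $h(x) =_{\mu_{h}(\rho)} h(x)$ is the trivial path $h(x) =_{\rho} h(x)$. I would record this fact first, either as a derived rule of $LND_{EQ}-TRS$ or simply as the evident congruence equation, since all four arguments reduce to it.

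For (1), applying $Id$-$E$ to $p$ reduces matters to $p$ being the reflexive path $\rho$. As the rewrite rules act within arbitrary contexts, $\tau(\rho,q) \rhd_{tlr} q$ gives $\mu_{f}(\tau(\rho,q)) \rhd_{rw} \mu_{f}(q)$, while on the other side $\mu_{f}(\rho) \rhd_{rw} \rho$ and then $\tau(\rho,\mu_{f}(q)) \rhd_{tlr} \mu_{f}(q)$; so both sides reduce to $\mu_{f}(q)$ and are therefore $rw$-equal, which is exactly a higher computational path between them. For (2), eliminating $p$ leaves $\mu_{f}(\sigma(\rho))$ against $\sigma(\mu_{f}(\rho))$: using $\sigma(\rho) \rhd_{sr} \rho$ together with $\mu_{f}(\rho) \rhd_{rw} \rho$ the left side reduces to $\rho$, and $\sigma(\mu_{f}(\rho)) \rhd_{rw} \sigma(\rho) \rhd_{sr} \rho$ reduces the right side to $\rho$. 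For (3), eliminating $p$ leaves $\mu_{g}(\mu_{f}(\rho))$ against $\mu_{g \circ f}(\rho)$, and two (respectively one) applications of $\mu_{h}(\rho) \rhd_{rw} \rho$ collapse both to the reflexive path at $g(f(x)) \equiv (g \circ f)(x)$. For (4), the definitional equality $Id_{A}(x) \equiv x$ (where $Id_{A}$ denotes the identity function on $A$) makes $\mu_{Id_{A}}(\rho)$ literally the reflexive path at $x$, which is the image under $Id$-$E$ of the case $p = \rho$, so the two sides are the same path. Alternatively, if one reads the recursive behaviour of $\mu_{f}$ on the constructors $\rho, \sigma, \tau$ directly off its definition, then (1), (2) and the reflexive instance of (4) are immediate, and only (3) needs the short induction above.

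The main difficulty lies not in any single computation — each base case is a short rewrite — but in two structural points on which everything rests. First, $\mu_{h}(\rho) \rhd_{rw} \rho$ is not literally one of the $39$ listed rules of $LND_{EQ}-TRS$, so it has to be justified separately, as the congruence instance it is. Second, one must be explicit that reducing the two path expressions to a common path genuinely produces the asserted \emph{equality between computational paths}; this follows at once from the definition of $rw$-equality as a finite series of $rw$-contractions and reversed contractions, and, if one wants the witnessing $rw$-sequence to be canonical, from strong normalisation of $LND_{EQ}-TRS$. One should also note at the outset that invoking $Id$-$E$ here is legitimate, i.e. that it suffices to treat the reflexive case, which is the usual path-induction step in this framework.
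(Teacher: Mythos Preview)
Your approach and the paper's are genuinely different, and yours has a real gap.

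The paper does not derive these equalities from the existing $39$ rules of $LND_{EQ}$-$TRS$ at all. For item (2) it invokes the already-present rule 30, $\sigma(\mu(r)) \triangleright_{sm} \mu(\sigma(r))$, so that $\mu_{f}(\sigma(p)) =_{\sigma(sm)} \sigma(\mu_{f}(p))$. For items (1), (3) and (4), however, the paper explicitly \emph{introduces new rewrite rules} --- numbered 40 ($tf$), 41 ($cf$) and 42 ($ci$) --- exhibiting in each case two derivation trees reaching the same conclusion from the same premises, and then declaring the rewrite between them. In other words, the ``proof'' of three of the four items is an extension of the rewriting system, not a derivation inside the pre-existing one. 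This is why your search for a derivation from the $39$ rules runs into the missing reduction $\mu_{h}(\rho) \rhd_{rw} \rho$: in this system that step simply is not available, and the authors chose to add rules rather than supply it.

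The structural gap in your argument is the use of $Id$-$E$. You write that ``applying $Id$-$E$ to $p$ reduces matters to $p$ being the reflexive path $\rho$'', but that is the behaviour of the standard $J$-eliminator, not of the $REWR$ rule in this paper. Here the elimination rule says: from $m:Id_{A}(a,b)$ together with a construction $h(g):C$ carried out under a \emph{generic} hypothesis $[a =_{g} b:A]$, conclude $REWR(m,\acute{g}.h(g)):C$. The hypothetical $g$ is an arbitrary computational path, not the reflexive one; the $\beta$-reduction substitutes the actual path $m$ for $g$, it does not specialise $g$ to $\rho$. So your ``base case'' reductions, even if they went through, would not discharge the obligation the eliminator actually imposes. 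Your closing caveat (``one should also note \ldots that it suffices to treat the reflexive case'') is exactly the point that fails in this framework.
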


\begin{proof}   The proof of this Lemma can be seen in Lemma $4.6$ of \cite{Art4}.. The Lemma 14 introduces rewriting rules $40$, $41$ and $42$.
\end{proof}

\subsection{Transport}\label{transports}

As stated in  \cite{Ruy5}, substitution can take place when no quantifier is involved. In this sense, there is a `'quantifier-less' notion of substitution. In type theory, this `'quantifier-less' substitution is given by a operation known as transport \cite{hott}. In our path-based approach, we formulate a new inference rule of `'quantifier-less' substitution \cite{Ruy5}:

\begin{prooftree}
	
	\AxiomC{$x =_{p} y : A$}
	\AxiomC{$f(x) : P(x)$}
	\BinaryInfC{$p(x,y)\circ f(x) : P(y)$}
	
\end{prooftree}

We use this transport operation to solve one essential issue of our path-based approach. We know that given a path $x =_{p} y : A$ and function $f: A \rightarrow B$, the application of axiom $\mu$ yields the path $f(x) =_{\mu_{f}(p)} f(y) : B$. The problem arises when we try to apply the same axiom for a dependent function $f : \Pi_{(x : A)} P(x)$. In that case, we want $f(x) = f(y)$, but we cannot guarantee that the type of $f(x) : P(x)$ is the same as $f(y) : P(y)$. The solution is to apply the transport operation and thus, we can guarantee that the types are the same:

\begin{center}

	\begin{prooftree}
		\AxiomC{$x =_{p} y : A$}
		\AxiomC{$f : \Pi_{(x : A)} P(x)$}
		\BinaryInfC{$p(x,y) \circ f(x) =_{\mu_{f}(p)} f(y) : P(y)$}
	\end{prooftree}
\end{center}

\begin{lemma}[Leibniz's Law]
	The type $\Pi_{(x,y: A)} (Id_{A}(x,y) \rightarrow P(x) \rightarrow P(y))$ is inhabited.
\end{lemma}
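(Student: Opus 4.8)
The plan is to mimic the path-based constructions of reflexivity, symmetry and transitivity from Section~\ref{path}, now taking the transport operation of Section~\ref{transports} as the crucial new ingredient. Exactly as in those constructions, producing an inhabitant of the stated $\Pi$-type reduces to finding the right \emph{reason}: a hypothetical computational path $x =_{g} y : A$ from which, together with a term of $P(x)$, we can manufacture a term of $P(y)$. The role played by the axiom $\sigma$ in the symmetry proof and by $\tau$ in the transitivity proof will here be played by transport.

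First I would open the local assumptions $[x:A]$, $[y:A]$, $[p : Id_{A}(x,y)]$ and $[u : P(x)]$. From the hypothetical reason $[x =_{g} y : A]$ and from $u : P(x)$, a single application of the transport inference rule of Section~\ref{transports} yields $g(x,y)\circ u : P(y)$. This is precisely the functional expression $h(g) := g(x,y)\circ u$, depending on the reason $g$, that the elimination rule $Id - E$ requires, with conclusion type $C = P(y)$. Applying $Id - E$ (i.e.\ the $REWR$ constructor) to $p$ and $\acute{g}.h(g)$ then gives $REWR(p, \acute{g}.(g(x,y)\circ u)) : P(y)$, eliminating the identity type. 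From here it is bookkeeping: discharge $[u:P(x)]$ by $\rightarrow - I$ to obtain a term of $P(x)\rightarrow P(y)$; discharge $[p:Id_{A}(x,y)]$ by $\rightarrow - I$ to obtain $Id_{A}(x,y)\rightarrow P(x)\rightarrow P(y)$; and discharge $[y:A]$ and then $[x:A]$ by two applications of $\Pi - I$, arriving at $\lambda x.\lambda y.\lambda p.\lambda u.\,REWR(p, \acute{g}.(g(x,y)\circ u)) : \Pi_{(x,y:A)}(Id_{A}(x,y)\rightarrow P(x)\rightarrow P(y))$.

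The only point demanding care — and the step I would treat as the real obstacle — is the type-correctness of the transported term inside the scope of the $\acute{g}$-binder: one must be sure that under the assumption $x =_{g} y : A$ the term $g(x,y)\circ u$ genuinely inhabits $P(y)$ rather than merely $P(x)$, so that the conclusion type of the elimination rule does not secretly depend on the discharged reason $g$. This is exactly the guarantee supplied by the transport rule, so no additional machinery is needed; once that is observed, the construction is entirely routine and parallels the symmetry derivation verbatim, with the rewrite $\sigma(t)$ replaced throughout by the transport operation $g(x,y)\circ(-)$.
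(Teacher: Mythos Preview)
Your proposal is correct and follows essentially the same approach as the paper: both use the transport rule as the key step to pass from $u:P(x)$ and a hypothetical path to a term of $P(y)$, then apply $Id-E$ via $REWR$ and close off with the appropriate abstractions. The only cosmetic difference is the order in which the $P(x)$-hypothesis is discharged relative to the $REWR$ step: the paper abstracts over the $P(x)$ term \emph{before} applying $Id-E$ (so the conclusion type of $REWR$ is already $P(x)\rightarrow P(y)$), whereas you apply $Id-E$ first with conclusion type $P(y)$ and then discharge $u$; both orderings are fine and yield equivalent inhabitants.
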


\begin{proof}
	We construct the following tree:
	\begin{prooftree}
		\AxiomC{$[x =_{p} y : A]$}
		\AxiomC{$[f(x) : P(x)]$}
		\BinaryInfC{$p(x,y)\circ f(x) : P(y)$}
		\UnaryInfC{$\lambda f(x). p(x,y) \circ f(x) : P(x) \rightarrow P(y)$}
		\AxiomC{$[z: Id_{A}(x,y)]$}
		\BinaryInfC{$REWR(z,\lambda p.\lambda f(x).p(x,y) \circ f(x)) : P(x) \rightarrow P(y)$}
		\UnaryInfC{$\lambda x. \lambda y.\lambda z.REWR(z,\lambda p.\lambda f(x).p(x,y) \circ f(x)) : \Pi_{(x,y: A)} (Id_{A}(x,y) \rightarrow P(x) \rightarrow P(y))$}
	\end{prooftree}
\end{proof}

The function $\lambda f(x). p(x,y) \circ f(x) : P(x) \rightarrow P(y)$ is usually written as $transport^{p}(p, -)$ and $transport^{p}(p,f(x)) : P(y)$ is usually written as $p_{*}(f(x))$. 

\begin{lemma}
	For any $P(x) \equiv B$, $x =_{p} y : A$ and $b : B$, there is a path $transport^{P}(p,b) = b$.
\end{lemma}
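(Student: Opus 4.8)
The plan is to produce an explicit computational path realizing $transport^{P}(p,b) = b$ by specializing the transport rule to a \emph{constant} function. Recall that, with the notation fixed just above, $transport^{P}(p,b)$ abbreviates $p(x,y)\circ b$, so it suffices to exhibit a computational path from $p(x,y)\circ b$ to $b$. The key observation is that, since the family is constant ($P(x)\equiv B$), the element $b:B$ gives rise to the non-dependent function $\lambda x.b : \Pi_{(x:A)}P(x)$ (which is just $\lambda x.b : A\to B$), and feeding this function to the transport inference rule produces the path we want essentially for free. This is the path-based counterpart of the usual proof by induction on $p$ (reducing to the case $p\equiv\rho$), but it avoids having to postulate a separate computation rule for transport along $\rho$.

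First I would note that $(\lambda x.b)(x)\rhd_{\beta} b$ and $(\lambda x.b)(y)\rhd_{\beta} b$, so, up to $\beta$-rewriting, the function $\lambda x.b$ ``is'' the chosen element $b$ in every fibre. Next I would instantiate the transport rule with $f :\equiv \lambda x.b$:
\[\frac{x=_{p} y : A \qquad \lambda x.b : \Pi_{(x:A)}P(x)}{p(x,y)\circ (\lambda x.b)(x) =_{\mu_{(\lambda x.b)}(p)} (\lambda x.b)(y) : P(y)}\]
Since $P(y)\equiv B$ and, by the $\beta$-reductions above, $(\lambda x.b)(x)$ and $(\lambda x.b)(y)$ both rewrite to $b$, the conclusion of this rule is, up to those rewrites, precisely $p(x,y)\circ b =_{\mu_{(\lambda x.b)}(p)} b : B$, i.e.\ $transport^{P}(p,b) =_{\mu_{(\lambda x.b)}(p)} b$. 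Thus $\mu_{(\lambda x.b)}(p)$ is the witnessing path; if one wants to be fully explicit, one composes it on the left endpoint with the $\mu$ (inside the context $p(x,y)\circ[\ ]$, i.e.\ a ${\tt sub}$ step) of the relevant $\beta$-rewrite, obtaining a single computational path from $transport^{P}(p,b)$ to $b$.

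The main obstacle is purely bookkeeping rather than mathematical content: one must keep the two endpoints straight and make sure that $p(x,y)\circ b$ (what $transport^{P}(p,b)$ denotes) and $p(x,y)\circ(\lambda x.b)(x)$ (what the rule above literally produces) are correctly identified through the $\beta$-rewrite $(\lambda x.b)(x)\rhd_{\beta} b$ within that context. As an optional strengthening, in the spirit of rules \textbf{40}--\textbf{42} of Lemma~\ref{lemma2}, one could add a rewrite rule stating that $\mu_{(\lambda x.b)}(p)\rhd \rho$ — the action of a constant function on any rewrite is a trivial (reflexive) rewrite — which would refine the conclusion to $transport^{P}(p,b)\rhd_{rw}\rho$; this is not needed for the lemma as stated, which only asserts the existence of a path.
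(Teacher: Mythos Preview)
Your proposal is correct and follows essentially the same approach as the paper: both arguments introduce the constant function $\lambda x.b:\Pi_{(x:A)}P(x)$, feed it to the transport rule, and read off the path $\mu_{(\lambda x.b)}(p)$ (the paper writes simply $\mu(p)$) from $transport^{P}(p,b)$ to $b$. You are more careful than the paper about the $\beta$-reductions identifying $(\lambda x.b)(x)$ and $(\lambda x.b)(y)$ with $b$, which the paper absorbs silently into the ``$\equiv$'' notation, but the underlying idea is identical.
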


\begin{proof}
	The first thing to notice is the fact that in our formulation of transport, we always need a functional expression $f(x)$, and in this case we have only a constant term $b$. To address this problem, we consider a function $f = \lambda. b$ and then, we transport over $f(x) \equiv b$:
	
	\begin{center}
		$transport^{P}(p,f(x) \equiv b) =_{\mu(p)} (f(y) \equiv b)$. 
	\end{center}
	
	Thus, $transport^{P}(p,b) =_{\mu(p)} b$. We sometimes call this path $transportconst^{B}_{p}(b)$.
	
\end{proof}

We shall omit the proof of some results in this section. In these cases, where a proof has been omitted, the reference and location in the text where such proof can be checked are included.

\begin{lemma}\label{lemma5}
	For any $f: A \rightarrow B$ and $x =_{p} y : A$, we have 
	
	\begin{center}
		$\mu(p)(p_{*}(f(x)),f(y)) = \tau(transportconst^{B}_{p}, \mu_{f}(p))(p_{*}(f(x)),f(y))$
	\end{center}
\end{lemma}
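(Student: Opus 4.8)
The plan is to prove the identity by induction on the computational path $p$ — that is, by an application of the $Id$-elimination rule ($REWR$), which reduces the statement to the case where $p$ is the reflexive path $\rho$. Before that I would make the two sides explicit as inhabitants of $Id_{B}(p_{*}(f(x)),f(y))$: viewing $f : A \rightarrow B$ as a (dependent) function into the constant family $P(x) \equiv B$, the transport inference rule of Section~\ref{transports} produces the path $\mu(p) : p_{*}(f(x)) = f(y)$ on the left; on the right, the preceding lemma supplies $transportconst^{B}_{p}(f(x)) : p_{*}(f(x)) = f(x)$ and the $\mu$-rule (functoriality, cf.\ Lemma~\ref{lemma2}) supplies $\mu_{f}(p) : f(x) = f(y)$, and the axiom $\tau$ composes them into $\tau(transportconst^{B}_{p}(f(x)),\mu_{f}(p)) : p_{*}(f(x)) = f(y)$. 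So it is enough to exhibit a computational path (equivalently, an $rw$-equality) between these two path-terms, and by $REWR$-elimination on $p$ it suffices to treat the case $p \equiv \rho$, hence $x \equiv y$.

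In the base case $p \equiv \rho$, transport along the reflexive path is the identity, so $\rho_{*}(f(x)) \equiv f(x)$. The left-hand side then reduces to $\rho(f(x),f(x))$, since applying $f$ along a reflexive path yields a reflexive path. On the right, by the very construction used in the preceding lemma, $transportconst^{B}_{\rho}(f(x))$ is the path called $\mu(\rho)$ for the constant function $\lambda.f(x)$, so it too reduces to $\rho$; likewise $\mu_{f}(\rho)$ reduces to $\rho$. Hence the right-hand side $rw$-reduces to $\tau(\rho,\rho)(f(x),f(x))$, and rule $5$ of $LND_{EQ}-TRS$, namely $\tau(\mathcal{C}[r],\mathcal{C}[\rho]) \triangleright_{trr} \mathcal{C}[r]$ taken with $r := \rho$ and the empty context, gives $\tau(\rho,\rho) \triangleright_{trr} \rho$. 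Thus both sides $rw$-reduce to $\rho(f(x),f(x))$; by Proposition~\ref{proposition3.7} they are $rw$-equal, and $Id$-$I$ applied to this $rw$-equality inhabits the desired identity type. Reassembling the elimination yields the result for an arbitrary $p$.

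The step I expect to be the main obstacle is the bookkeeping of the base case inside $LND_{EQ}-TRS$: one must justify that $\mu_{f}(\rho)$ reduces to $\rho$ for an arbitrary $f$, and there is no rule among the listed thirty-nine stating this in that generality, so it has to be either taken as definitional (transport and application along reflexivity being trivial) or, in the spirit of rules $40$--$42$, added as a dedicated reduction $\mu_{f}(\rho) \triangleright \rho$; one must also keep the dependent and non-dependent readings of $\mu$ carefully apart, since the notation overloads them. A secondary difficulty is the induction itself: the $Id$-elimination presented in Section~\ref{path} does not let the conclusion type depend on $p$, so one either phrases the claim so that the motive is constant after abstracting $x$ and $y$, or argues directly on the generic reason $g$ — showing, using the preceding lemma together with Lemma~\ref{lemma2}, that $\mu(g)$ and $\tau(transportconst^{B}_{g}(f(x)),\mu_{f}(g))$ have a common $rw$-reduct. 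Once these routine checks are discharged, the lemma follows just as the corresponding statement does in homotopy type theory.
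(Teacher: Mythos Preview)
Your argument is the standard HoTT path-induction proof, and with the caveats you already flag it would go through; but it is \emph{not} the route the paper takes. The paper does not induct on $p$ at all. Instead it proceeds in the same style in which the $LND_{EQ}$-$TRS$ rules themselves are introduced: it exhibits two derivation trees that start from the same premises ($x =_{p} y : A$ and $f(x):B$) and arrive at the same judgement $p_{*}(f(x)) = f(y) : B$ under different path labels. In the first tree one regards $f(x)$ as a constant $b:B$ and transports, obtaining $p_{*}(f(x)) =_{\mu(p)} f(x)$ (this is exactly $transportconst^{B}_{p}$ from the preceding lemma), then composes with $f(x) =_{\mu_{f}(p)} f(y)$ via $\tau$; in the second tree one transports $f(x)$ as a genuine functional expression and gets $p_{*}(f(x)) =_{\mu_{f}(p)} f(y)$ directly. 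The equality of the two labels is then read off as the relation between the two derivations, exactly as Definitions~2.7--2.15 introduce the rewrite rules.

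The practical difference is precisely the pair of obstacles you identified. Your induction needs a reduction $\mu_{f}(\rho) \triangleright \rho$ that is absent from the listed rules, and it leans on a dependent reading of $Id$-elimination that the paper's $REWR$ rule does not obviously support. The paper's ``two-trees'' argument sidesteps both: nothing special happens at $\rho$, and no elimination is invoked. Conversely, your approach is portable to any setting with standard path induction, while the paper's is tied to its derivation-tree methodology and is closer to \emph{stipulating} a new equation than to \emph{proving} one from prior rules.
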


\begin{proof} 

	The first thing to notice is that in this case, $transportconst^{B}_{p}$ is the path $\mu(p)(p*(f(x),f(x))$ by lemma \textbf{8}. As we did to the rules of $LND_{EQ}-TRS$, we establish this equality by getting to the same conclusion from the same premises by two different trees:

	In the first tree, we consider $f(x) \equiv b : B$ and transport over $b : B$:

	\begin{prooftree}
		\AxiomC{$x =_{p} y : A$}	 	
		\AxiomC{$f(x) \equiv b : B$}
		\BinaryInfC{$p(x,y) \circ (f(x) \equiv b) : B$}
	   \UnaryInfC{$p_{*}(f(x)) =_{\mu_{f}(p)} b \equiv f(x)$}
		\AxiomC{$x =_{p} y : A$}
		\AxiomC{$f: A \rightarrow B$}		
		\BinaryInfC{$f(x) =_{\mu_{f}(p)} f(y) : B$}
		\BinaryInfC{$p_{*}(f(x)) =_{\tau(\mu_{f}(p),\mu_{f}(p))} f(y) : B$}
	\end{prooftree}

	In the second one, we consider $f(x)$ as an usual functional expression and thus, we transport the usual way:
	\begin{prooftree}
		\AxiomC{$x =_{p} y : A$}
		\AxiomC{$f(x) : B$}
		\BinaryInfC{$p(x,y) \circ f(x) : B$}
		\UnaryInfC{$p_{*}(f(x)) =_{\mu_{f}(p)} f(y) : B$}
	\end{prooftree}
\end{proof}

\begin{lemma}
	For any $x =_{p} y : A$ and $q: y =_{A} z : A$, $f(x) : P(x)$, we have
	
	\begin{center}
		$q_{*}(p_{*}(f(x))) = (p\circ q)_{*}(f(x))$.
	\end{center}
\end{lemma}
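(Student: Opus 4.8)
The plan is to follow exactly the methodology used for the functoriality identities of Lemma~\ref{lemma2} and for the transport lemmas above: establish the claimed equality between the two terms of type $P(z)$ by producing two derivations that start from the same premises $x =_{p} y : A$, $y =_{q} z : A$ and $f(x) : P(x)$ and end at the two sides of the equation, and then record this coincidence as a new rewrite rule of $LND_{EQ}$-$TRS$, in the same spirit in which rules \textbf{40}--\textbf{42} were introduced. So the target is a fresh rule, say rule \textbf{43}, denoted $\rhd_{tq}$, witnessing $(p\circ q)_{*}(f(x)) =_{tq} q_{*}(p_{*}(f(x)))$.

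First I would build the \emph{nested} derivation. Applying the transport inference rule of Subsection~\ref{transports} to $x =_{p} y : A$ and $f(x) : P(x)$ yields $p(x,y)\circ f(x) : P(y)$, i.e.\ $p_{*}(f(x)) : P(y)$. Since $y =_{q} z : A$, a second application of the transport rule to $p_{*}(f(x)) : P(y)$ produces $q(y,z)\circ\big(p(x,y)\circ f(x)\big) : P(z)$, that is $q_{*}(p_{*}(f(x))) : P(z)$. The typing is immediate: transport along $q$ carries a term of $P(y)$ to a term of $P(z)$.

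Next I would build the \emph{composite} derivation. From $x =_{p} y : A$ and $y =_{q} z : A$, the transitivity axiom $\tau$ gives the composite path $x =_{\tau(p,q)} z : A$, which is precisely $p\circ q$. A single application of the transport rule to this path and to $f(x) : P(x)$ then gives $(p\circ q)(x,z)\circ f(x) : P(z)$, i.e.\ $(p\circ q)_{*}(f(x)) : P(z)$. The two derivations now share the same three leaves and both conclude in $P(z)$; hence we postulate rule \textbf{43}, $(p\circ q)_{*}(f(x)) \rhd_{tq} q_{*}(p_{*}(f(x)))$, and read off $q_{*}(p_{*}(f(x))) =_{tq} (p\circ q)_{*}(f(x))$ (using $\sigma$ if one orients it the other way). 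Observe that this is the transport-level analogue of the compositional identity $\mu_{g}(\mu_{f}(p)) = \mu_{g\circ f}(p)$ of Lemma~\ref{lemma2}(3), with path concatenation $\tau$ playing the role of function composition.

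The main obstacle is not the construction of the two trees---that is routine, as for rules \textbf{40}--\textbf{42}---but justifying that rule \textbf{43} is genuinely needed and harmless. It is needed because for a general type family $P$ the transport operation is primitive (it is not reducible to $\mu$, unlike the constant-family case treated in the transport lemmas), so no combination of the existing $39$ rules identifies the two sides. It is harmless provided the extended system remains terminating and confluent: for termination one checks $e >^{*} d$ for the new rule under the recursive path ordering, giving the composite-transport operator precedence above the transport operator exactly as $\tau$ sits above the substitution operators in the precedence ordering; for confluence one inspects the new critical pairs, in particular the overlap of rule \textbf{43} with the associativity rule \textbf{37} ($\tau(\tau(t,r),s)\rhd_{tt}\tau(t,\tau(r,s))$), where one expects a three-fold composite $r_{*}(q_{*}(p_{*}(f(x))))$ to reduce to $(p\circ q\circ r)_{*}(f(x))$ along both paths, closing the pair. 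I would carry out these two checks last, after fixing the orientation and the precedence of rule \textbf{43}.
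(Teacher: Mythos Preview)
Your argument is not wrong in spirit, but it takes a substantially heavier route than the paper and rests on a mistaken premise. You claim that for a general dependent family $P$ the transport operation is primitive and not reducible to $\mu$, so that a fresh rewrite rule is required. But in the paper's formulation of transport (Subsection~\ref{transports}) every transport comes equipped with a computational path: by definition, $p_{*}(f(x)) =_{\mu_{f}(p)} f(y) : P(y)$ for any dependent $f : \Pi_{(x:A)}P(x)$. This holds for arbitrary $P$, not just constant families.

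The paper therefore proceeds directly, without postulating anything new. One simply unfolds both sides to the common term $f(z)$:
\[
q_{*}(p_{*}(f(x))) \; =_{\mu(p)} \; q_{*}(f(y)) \; =_{\mu(q)} \; f(z),
\qquad
(p\circ q)_{*}(f(x)) \; =_{\mu(p\circ q)} \; f(z),
\]
and the desired equality follows by applying $\sigma$ and $\tau$ to these existing paths. No rule~\textbf{43}, no termination or confluence check, and no critical-pair analysis are needed. Your proposal would produce a valid proof, but at the cost of enlarging the rewriting system and then having to verify its metatheory; the paper's approach shows that this lemma is already a consequence of the path structure present in the transport rule itself.
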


\begin{proof}
	We develop both sides of the equation and wind up with the same result:
	
	\begin{center}
		$q_{*}(p_{*}f(x)) =_{\mu(p)} q_{*}(f(y)) =_{\mu(q)} f(z)$
		
		$(p\circ q)_{*}(f(x)) =_{\mu(p \circ q)} f(z)$.
	\end{center}
\end{proof}

\begin{lemma}\label{lemma7}
	For any $f: A \rightarrow B$, $x =_{p} y : A$ and $u: P(f(x))$, we have:
	
	\begin{center}
		$transport^{P \circ f}(p,u) = transport^{P}(\mu_{f}(p),u)$
	\end{center}
\end{lemma}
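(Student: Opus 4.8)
The plan is to follow the same methodology that was used for Lemma~\ref{lemma2}: exhibit two derivation trees that reach a term of the same type from the same premises, and record the induced identification of computational paths as a new rule of $LND_{EQ}-TRS$. Here the common premises are $x =_{p} y : A$, $f : A \rightarrow B$ and $u : P(f(x))$, and the common conclusion is a term of type $P(f(y))$.

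First I would write the left-hand derivation: regard $P \circ f$ as a type family over $A$ (so that $(P \circ f)(x) \equiv P(f(x))$) and apply the transport rule of Subsection~\ref{transports} directly to $x =_{p} y : A$ and $u : (P \circ f)(x)$, producing $p(x,y)\circ u$, i.e. $transport^{P \circ f}(p,u) : P(f(y))$. Then I would write the right-hand derivation: first apply the axiom $\mu$ to $x =_{p} y : A$ and $f : A \rightarrow B$ to obtain $f(x) =_{\mu_{f}(p)} f(y) : B$, and then transport $u : P(f(x))$ along this path inside the family $P$ over $B$, producing $(\mu_{f}(p))(f(x),f(y))\circ u$, i.e. $transport^{P}(\mu_{f}(p),u) : P(f(y))$. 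Since both trees start from $x =_{p} y : A$, $f : A \rightarrow B$, $u : P(f(x))$ and end with a term of type $P(f(y))$, I would introduce a new rewrite rule, rule~\textbf{43}, of the form $transport^{P}(\mu_{f}(p),u)\;\triangleright_{tpf}\;transport^{P \circ f}(p,u)$, reducing the nested form to the compact one, and conclude $transport^{P \circ f}(p,u) =_{\sigma(tpf)} transport^{P}(\mu_{f}(p),u)$, which is exactly the claim. To keep the extended system well-behaved I would also extend the precedence on the rewrite operators so that the new rule strictly decreases in the recursive path ordering, and re-examine the relevant critical pairs, exactly as was done when rules $38$ and $39$ were added.

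An alternative route that avoids adding a primitive argues by induction on the computational path $p$. For the base case $p \equiv \rho$ one has $\mu_{f}(\rho) = \rho$ (the $f$-image of a trivial rewrite is trivial) and transport of any $u$ along a reflexive path is $u$ itself, so both sides reduce to $u$; the transitivity step then follows from $\mu_{f}(\tau(p,q)) = \tau(\mu_{f}(p),\mu_{f}(q))$ (Lemma~\ref{lemma2}(1)) together with the functoriality of transport along $\tau$ established in Lemma~\ref{lemma7}'s neighbours.

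The main obstacle is that the two transports are genuinely distinct operations: one reindexes along $p$ in the pulled-back family $P \circ f$ over $A$, the other transports along $\mu_{f}(p)$ in $P$ over $B$, and, since the transport rule of Subsection~\ref{transports} is phrased in terms of a functional expression, there is no symmetric way to turn the right-hand functional datum into one living over $A$. Consequently the identification cannot be squeezed out of the existing $42$ rules, and a new primitive really is needed, just as Lemma~\ref{lemma2} forced the introduction of rules $40$–$42$. If one instead prefers the inductive route, the obstacle shifts to justifying the two base-case facts, $\mu_{f}(\rho)=\rho$ and $transport^{Q}(\rho,u)=u$, neither of which has been recorded explicitly in the text and both of which would themselves be proved by the same two-trees technique.
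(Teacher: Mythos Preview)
Your proposal to add a new primitive rewrite rule would of course produce a proof, but the paper does \emph{not} follow the pattern of Lemma~\ref{lemma2} here and in particular does not introduce any rule~\textbf{43}. Your diagnosis that ``the identification cannot be squeezed out of the existing $42$ rules, and a new primitive really is needed'' is therefore mistaken: the paper constructs an explicit computational path between the two transports using only the transport machinery already set up in Subsection~\ref{transports}.

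The idea you are missing is the one the paper announces in its first sentence of the proof: since $u : P(f(x))$, one may write $u$ as a functional expression in two ways, namely as $(g\circ f)(x)$ with $g\circ f$ viewed over $A$, or as $g(f(x))$ with $g$ viewed over $B$. Now recall that in this paper transport is not merely a term of the target type but comes equipped with a definitional path $p_{*}(h(x)) =_{\mu_{h}(p)} h(y)$ (this is how transport is introduced in Subsection~\ref{transports}). Applying this to the two readings of $u$ gives a path from $transport^{P\circ f}(p,u)$ to $g(f(y))$ and a path from $transport^{P}(\mu_{f}(p),u)$ to the same term $g(f(y))$; composing the first with the $\sigma$ of the second yields the desired equality, with witness $\tau(\mu(p),\sigma(\mu(p)))$. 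So the common ``target'' $g(f(y))$ is the bridge, and no new rule is required.

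In short: your two-derivation-trees picture is right, but the point is not that they share premises and conclusion type (which would only justify postulating a rule); the point is that each transport already carries a canonical $\mu$-path to the \emph{same} term, and those two paths compose. Your alternative inductive route is also further from the paper than you suggest, since the paper nowhere reasons by induction on $p$.
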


\begin{proof} 
	This lemma hinges on the fact that there are two possible interpretations of $u$ and this stems from the fact that $(g \circ f)(x) \equiv g(f(x))$. Thus, we can see $u$ as a functional expression $g$ on $f(x)$ or an expression $g \circ f$ on $x$:

\begin{center}
		     \resizebox{1.1\hsize}{!}{
			\begin{bprooftree}

		\AxiomC{$x =_{p} y :A$}
		\AxiomC{$ u \equiv (g \circ f)(x) : (P \circ f)(x)$}
		\BinaryInfC{$p(x,y) \circ (g \circ f)(x) : (P \circ f)(y)$}
		\UnaryInfC{$p(x,y) \circ (g \circ f)(x) =_{\mu(p)} (g \circ f)(y) : (P \circ f)(y)$}
		\UnaryInfC{$p(x,y) \circ (g \circ f)(x) =_{\mu(p)} g(f(y)) : P(f(y))$}
		
		\AxiomC{$x =_{p} y : A$}
		\UnaryInfC{$f(x) =_{u_{f}(p)} f(y) : B$}
		\AxiomC{$u \equiv g(f(x)) : P(f(x))$}
		\BinaryInfC{$\mu_{f}(p)(f(x),f(y)) \circ g(f(x)): P(f(y))$}
		\UnaryInfC{$\mu_{f}(p)(f(x),f(y)) \circ g(f(x)) =_{\mu(p)} g(f(y)): P(f(y))$}	
		\UnaryInfC{$ g(f(y)) =_{\sigma(\mu(p))} \mu_{f}(p)(f(x),f(y)) \circ g(f(x)) : P(f(y)) $}
		\BinaryInfC{$p(x,y) \circ (g \circ f)(x) =_{\tau(\mu(p),\sigma(\mu(p)))} \mu_{f}(p)(f(x),f(y)) \circ g(f(x)) : P(f(y))$}
		\UnaryInfC{$transport^{P \circ f}(p,u) =_{\tau(\mu(p),\sigma(\mu(p)))} transport^{P}(\mu_{f}(p),u)$}
		
\end{bprooftree}}
\end{center}

In particular, we have:

\begin{center}
		 \resizebox{1.1\hsize}{!}{	\begin{bprooftree}
		\AxiomC{$x =_{\rho_x} x :A$}
		\AxiomC{$ u \equiv (g \circ f)(x) : (P \circ f)(x)$}
  \BinaryInfC{$\rho_x(x,x) \circ (g \circ f)(x) : (P \circ f)(x)$}
		\UnaryInfC{$\rho_x(x,x) \circ (g \circ f)(x) =_{\mu_{g\circ f}(\rho_x)} (g \circ f)(x) : (P \circ f)(x)$}
		\UnaryInfC{$\rho_x(x,x) \circ (g \circ f)(x) =_{\mu_{g \circ f}(\rho_x)} g(f(x)) : P(f(x))$}
		
		\AxiomC{$x =_{\rho_x} x : A$}
		\UnaryInfC{$f(x) =_{u_{f}(\rho_x)} f(x) : B$}
		\AxiomC{$u \equiv g(f(x)) : P(f(x))$}
		\BinaryInfC{$\mu_{f}(p)(f(x),f(y)) \circ g(f(x)): P(f(x))$}
		\UnaryInfC{$\mu_{f}(\rho_x)(f(x),f(x)) \circ g(f(x)) =_{\mu_{g}(\rho_x)} g(f(x)): P(f(x))$}	
		\UnaryInfC{$ g(f(x)) =_{\sigma(\mu_{g}(\rho_x))} \mu_{f}(\rho_x)(f(x),f(x)) \circ g(f(x)) : P(f(x)) $}
		\BinaryInfC{$\rho_x(x,x) \circ (g \circ f)(x) =_{\tau(\mu_{g\circ f}(\rho_x),\sigma(\mu_{g}(\rho_x)))} \mu_{f}(\rho_x)(f(x),f(x)) \circ g(f(x)) : P(f(x))$}
		\UnaryInfC{$transport^{P \circ f}(\rho_x,u) =_{\tau(\mu_{g \circ f}(\rho_x),\sigma(\mu_{g}(\rho_x)))} transport^{P}(\mu_{f}(\rho_x),u)$}
    \end{bprooftree}}
 \end{center}
\end{proof}

\begin{lemma}
	For any $f: \Pi_{(x: A)}P(x) \rightarrow Q(x)$, $ x =_{p} y : A$ and $u(x) : P(x)$, we have:
	
	\begin{center}
		$transport^{Q}(p,f(u(x))) = f(transport^{P}(p,u(x)))$
	\end{center}
	
	\begin{proof}  The proof of this Lemma can be seen in Lemma $4.12$ of \cite{Art4}.
	\end{proof}
	
\end{lemma}
\section{Conclusion}\label{conclusao} 
Based on the idea of introducing a formal counterpart to a rewriting sequence between terms, which will then count as a proper entity and call \emph{computational path} in the syntax of type theory, we make use of such an entity to develop the central objective of our work. Using the concept of computational paths (sequences of rewrites), where these equalities reside at the level of rewrites, we show, the computational paths can be used to prove the relations obtained by the $LND_{EQ}-TRS_{2}$ writing system. Furthermore, we believe that in future work it will be possible to establish results that relate abstract algebra, algebraic topology and computational theory using the theory of computational paths as an approach.
\backmatter

\begin{appendices}

\section{LND$-{EQ}$-TRS Rules}

\ \\

Thus, we put together all those rules to compose our rewrite system:

\begin{definition}[$LND_{EQ}-TRS$ \citep{Ruy1}] 
	 \quad \\  
\begin{multicols}{2}
1. $\sigma(\rho) \triangleright_{sr} \rho$ \\ 
2. $\sigma(\sigma(r)) \triangleright_{ss} r$\\ 
3. $\tau({\mathcal C}[r] , {\mathcal C}[\sigma(r)]) \triangleright_{tr}  {\mathcal C }[\rho]$\\ 
4. $\tau({\mathcal C}[\sigma(r)], {\mathcal C}[r]) \triangleright_{tsr} {\mathcal C}[\rho]$\\ 
5. $\tau({\mathcal C}[r], {\mathcal C}[\rho]) \triangleright_{trr} {\mathcal C}[r]$\\ 
6. $\tau({\mathcal C}[\rho], {\mathcal C}[r]) \triangleright_{tlr} {\mathcal C}[r]$ \\ 
7. ${\tt sub_L}({\mathcal C}[r], {\mathcal C}[\rho]) \triangleright_{slr} {\mathcal C}[r]$\\ 
8. ${\tt sub_R}({\mathcal C}[\rho], {\mathcal C}[r]) \triangleright_{srr} {\mathcal C}[r]$ \\
9. ${\tt sub_L} ({\tt sub_L} (s, {\mathcal C}[r]), {\mathcal C}[\sigma(r)]) \triangleright_{sls} s$\\
10. ${\tt sub_L} ( {\tt sub_L} (s , {\mathcal C}[\sigma(r)]) , {\mathcal C}[r]) \triangleright_{slss} s$\\ 
11. ${\tt sub_R} ({\mathcal C}[s], {\tt sub_R} ({\mathcal C}[\sigma(s)],r)) \triangleright_{srs} r$\\ 
12. ${\tt sub_R} ({\mathcal C}[\sigma(s)], {\tt sub_R} ({\mathcal C}[s] ,  r )) \triangleright_{srrr} r$\\ 
13. 
$\mu_1 ( \xi_1 ( r))\triangleright_{mx2l1} r$\\
14. $\mu_1 ( \xi_\land ( r,s))\triangleright_{mx2l2} r$\\
15.
$\mu_2 ( \xi_\land ( r,s))\triangleright_{mx2r1} s$\\
16.
$\mu_2 ( \xi_2 ( s))\triangleright_{mx2r2} s$\\
17. 
$\mu ( \xi_1 (r) , s , u) \triangleright_{mx3l} s$\\ 
18. 
$\mu (\xi_2 (r) , s , u) \triangleright_{mx3r} u$\\ 
19.
$\nu (\xi (r)) \triangleright_{mxl} r$\\ 
20.
$\mu (\xi_2 (r) , s) \triangleright_{mxr} s$\\ 
21.
$\xi ( \mu_1 (r),\mu_2(r) ) \triangleright_{mx} r$ \\ 
22.
$\mu ( t, \xi_1 (r), \xi_2 (s)) \triangleright_{mxx} t$ \\ 
23. 
$\xi ( \nu (r) ) \triangleright_{xmr} r$ \\ 
24. 
$\mu (s,\xi_2 (r)) \triangleright_{mx1r} s$\\ 
25. $\sigma(\tau(r,s)) \triangleright_{stss} \tau(\sigma(s),  \sigma(r))$\\ 
26. $\sigma({\tt sub_L}(r,s)) \triangleright_{ssbl} {\tt sub_R}(\sigma(s), \sigma(r))$\\ 
27. $\sigma ({\tt sub_R} (r,s)) \triangleright_{ssbr} {\tt sub_L} (\sigma
(s),  \sigma (r))$\\ 
28. $\sigma(\xi (r)) \triangleright_{sx} \xi ( \sigma(r))$\\ 
29. $\sigma(\xi (s, r)) \triangleright_{sxss} \xi ( \sigma(s),  \sigma(r))$\\ 
30. $\sigma(\mu (r)) \triangleright_{sm} \mu ( \sigma(r))$ \label{rw30}\\ 
31. $\sigma(\mu (s, r)) \triangleright_{smss} \mu (\sigma(s),  \sigma(r))$\\ 
32. $\sigma(\mu (r,u,v)) \triangleright_{smsss} \mu ( \sigma(r),\sigma(u),\sigma(v))$\\
33. $\tau (r, {\tt sub_L} (\rho , s)) \triangleright_{tsbll} {\tt sub_L}  (r,s)$\\ 
34. $\tau (r, {\tt sub_R} (s, \rho)) \triangleright_{tsbrl}  {\tt 
	sub_L} (r,s)$\\ 
35. $\tau({\tt sub_L}(r,s),t) \triangleright_{tsblr} \tau (r, {\tt 
	sub_R} (s,t))$\\ 
36. $\tau ({\tt sub_R} (s,t),u) \triangleright_{tsbrr} {\tt sub_R} (s, \tau  (t,u))$\\ 
37. $\tau(\tau(t,r),s) \triangleright_{tt} \tau(t,\tau (r,s)) $\\
38. $\tau ({\mathcal C}[u], \tau ({\mathcal C}[\sigma(u)] , v)) \triangleright_{tts} v$\\
39. $\tau ({\mathcal C}[\sigma(u)] , \tau ({\mathcal C}[u] , v)) \triangleright_{tst} u$.
\end{multicols}
\label{LND-TRS}
\end{definition}




\end{appendices}


\bibliography{sn-bibliography}


\begin{thebibliography}{32}
\ifx \bisbn   \undefined \def \bisbn  #1{ISBN #1}\fi
\ifx \binits  \undefined \def \binits#1{#1}\fi
\ifx \bauthor  \undefined \def \bauthor#1{#1}\fi
\ifx \batitle  \undefined \def \batitle#1{#1}\fi
\ifx \bjtitle  \undefined \def \bjtitle#1{#1}\fi
\ifx \bvolume  \undefined \def \bvolume#1{\textbf{#1}}\fi
\ifx \byear  \undefined \def \byear#1{#1}\fi
\ifx \bissue  \undefined \def \bissue#1{#1}\fi
\ifx \bfpage  \undefined \def \bfpage#1{#1}\fi
\ifx \blpage  \undefined \def \blpage #1{#1}\fi
\ifx \burl  \undefined \def \burl#1{\textsf{#1}}\fi
\ifx \doiurl  \undefined \def \doiurl#1{\url{https://doi.org/#1}}\fi
\ifx \betal  \undefined \def \betal{\textit{et al.}}\fi
\ifx \binstitute  \undefined \def \binstitute#1{#1}\fi
\ifx \binstitutionaled  \undefined \def \binstitutionaled#1{#1}\fi
\ifx \bctitle  \undefined \def \bctitle#1{#1}\fi
\ifx \beditor  \undefined \def \beditor#1{#1}\fi
\ifx \bpublisher  \undefined \def \bpublisher#1{#1}\fi
\ifx \bbtitle  \undefined \def \bbtitle#1{#1}\fi
\ifx \bedition  \undefined \def \bedition#1{#1}\fi
\ifx \bseriesno  \undefined \def \bseriesno#1{#1}\fi
\ifx \blocation  \undefined \def \blocation#1{#1}\fi
\ifx \bsertitle  \undefined \def \bsertitle#1{#1}\fi
\ifx \bsnm \undefined \def \bsnm#1{#1}\fi
\ifx \bsuffix \undefined \def \bsuffix#1{#1}\fi
\ifx \bparticle \undefined \def \bparticle#1{#1}\fi
\ifx \barticle \undefined \def \barticle#1{#1}\fi
\bibcommenthead
\ifx \bconfdate \undefined \def \bconfdate #1{#1}\fi
\ifx \botherref \undefined \def \botherref #1{#1}\fi
\ifx \url \undefined \def \url#1{\textsf{#1}}\fi
\ifx \bchapter \undefined \def \bchapter#1{#1}\fi
\ifx \bbook \undefined \def \bbook#1{#1}\fi
\ifx \bcomment \undefined \def \bcomment#1{#1}\fi
\ifx \oauthor \undefined \def \oauthor#1{#1}\fi
\ifx \citeauthoryear \undefined \def \citeauthoryear#1{#1}\fi
\ifx \endbibitem  \undefined \def \endbibitem {}\fi
\ifx \bconflocation  \undefined \def \bconflocation#1{#1}\fi
\ifx \arxivurl  \undefined \def \arxivurl#1{\textsf{#1}}\fi
\csname PreBibitemsHook\endcsname

\bibitem{Ruy1}
\begin{barticle}
\bauthor{\bparticle{de} \bsnm{Queiroz}, \binits{R.J.G.B.}},
\bauthor{\bparticle{de} \bsnm{Oliveira}, \binits{A.G.}},
\bauthor{\bsnm{Ramos}, \binits{A.F.}}:
\batitle{Propositional equality, identity types, and direct computational paths}.
\bjtitle{South American Journal of Logic}
\bvolume{2}(\bissue{2}),
\bfpage{245}--\blpage{296}
(\byear{2016})
\end{barticle}
\endbibitem

\bibitem{Art3}
\begin{barticle}
\bauthor{\bsnm{Ramos}, \binits{A.F.}},
\bauthor{\bparticle{de} \bsnm{Queiroz}, \binits{R.J.G.B.}},
\bauthor{\bparticle{de} \bsnm{Oliveira}, \binits{A.G.}}:
\batitle{On the identity type as the type of computational paths}.
\bjtitle{Logic Journal of the IGPL}
\bvolume{25}(\bissue{4}),
\bfpage{562}--\blpage{584}
(\byear{2017}).
\doiurl{10.1093/jigpal/jzx015}
\end{barticle}
\endbibitem

\bibitem{Art4}
\begin{barticle}
\bauthor{\bsnm{Ramos}, \binits{A.F.}},
\bauthor{\bparticle{de} \bsnm{Queiroz}, \binits{R.J.G.B.}},
\bauthor{\bparticle{de} \bsnm{Oliveira}, \binits{A.G.}},
\bauthor{\bparticle{de} \bsnm{Veras}, \binits{T.M.L.}}:
\batitle{Explicit computational paths}.
\bjtitle{South American Journal of Logic}
\bvolume{4}(\bissue{2}),
\bfpage{441}--\blpage{484}
(\byear{2018})
\end{barticle}
\endbibitem

\bibitem{Art5}
\begin{botherref}
\oauthor{\bparticle{de} \bsnm{Veras}, \binits{T.M.L.}},
\oauthor{\bsnm{Ramos}, \binits{A.F.}},
\oauthor{\bparticle{de} \bsnm{Queiroz}, \binits{R.J.G.B.}},
\oauthor{\bparticle{de} \bsnm{Oliveira}, \binits{A.G.}}:
A topological application of labelled natural deduction.
South American Journal of Logic
\textbf{?}(?)
(2023).
(to appear)
\end{botherref}
\endbibitem

\bibitem{Ruy4}
\begin{bchapter}
\bauthor{\bparticle{de} \bsnm{Queiroz}, \binits{R.J.G.B.}},
\bauthor{\bsnm{Gabbay}, \binits{D.M.}}:
\bctitle{Equality in {L}abelled {D}eductive {S}ystems and the {F}unctional {I}nterpretation of {P}ropositional {E}quality}.
In: \bbtitle{Proceedings of the 9th Amsterdam Colloquium},
pp. \bfpage{547}--\blpage{565}
(\byear{1994}).
\bcomment{ILLC/Department of Philosophy, University of Amsterdam}
\end{bchapter}
\endbibitem

\bibitem{Lof1}
\begin{botherref}
\oauthor{\bsnm{Martin-L\"of.}, \binits{P.}}:
Intuitionistic {T}ype {T}heory.
{S}eries {S}tudies in {P}roof {T}heory. {B}ibliopolis, {N}aples, iv+91pp. Notes by Giovanni Sambin of a series of lectures given in Padova
(1984)
\end{botherref}
\endbibitem

\bibitem{Ruy5}
\begin{bchapter}
\bauthor{\bparticle{de} \bsnm{Queiroz}, \binits{R.J.G.B.}},
\bauthor{\bparticle{de} \bsnm{Oliveira}, \binits{A.G.}}:
\bctitle{Natural deduction for equality: The missing entity}.
In: \beditor{\bsnm{Pereira}, \binits{L.C.}},
\beditor{\bsnm{Haeusler}, \binits{E.}},
\beditor{\bparticle{de} \bsnm{Paiva}, \binits{V.}} (eds.)
\bbtitle{Advances in Natural Deduction - A Celebration of Dag Prawitz's Work},
pp. \bfpage{63}--\blpage{91}.
\bpublisher{Springer},
\blocation{Berlin}
(\byear{2014})
\end{bchapter}
\endbibitem

\bibitem{RuyAnjolinaLivro}
\begin{bbook}
\bauthor{\bparticle{de} \bsnm{Queiroz}, \binits{R.J.G.B.}},
\bauthor{\bparticle{de} \bsnm{Oliveira}, \binits{A.G.}},
\bauthor{\bsnm{Gabbay}, \binits{D.M.}}:
\bbtitle{The Functional Interpretation of Logical Deduction}.
\bpublisher{World Scientific},
\blocation{Singapore}
(\byear{2011})
\end{bbook}
\endbibitem

\bibitem{Anjo1}
\begin{botherref}
\oauthor{\bparticle{de} \bsnm{Oliveira}, \binits{A.G.}}:
Proof transformations for labelled natural deduction via term rewriting
(1995).
Master's thesis, Depto. de Inform{\'a}tica, Universidade Federal de Pernambuco, Recife, Brazil, April 1995
\end{botherref}
\endbibitem

\bibitem{Martinez-Rivillas-phd}
\begin{botherref}
\oauthor{\bsnm{Martinez-Rivillas}, \binits{D.O.}}:
Towards a homotopy domain theory.
PhD thesis,
CIn-UFPE
(November 2022).
Centro de Inform{\'a}tica, Universidade Federal de Pernambuco, Recife, Brazil.
\url{https://repositorio.ufpe.br/handle/123456789/49221}
\end{botherref}
\endbibitem

\bibitem{Martinez-Rivillas-igpl}
\begin{barticle}
\bauthor{\bsnm{Martinez-Rivillas}, \binits{D.O.}},
\bauthor{\bparticle{de} \bsnm{Queiroz}, \binits{R.J.G.B.}}:
\batitle{$\infty$-{G}roupoid generated by an arbitrary topological $\lambda$-model}.
\bjtitle{Logic Journal of IGPL}
\bvolume{30}(\bissue{3}),
\bfpage{465}--\blpage{488}
(\byear{2022}).
\doiurl{10.1093/jigpal/jzab015}
\end{barticle}
\endbibitem

\bibitem{Martinez-Rivillas-aml}
\begin{barticle}
\bauthor{\bsnm{Martinez-Rivillas}, \binits{D.O.}},
\bauthor{\bparticle{de} \bsnm{Queiroz}, \binits{R.J.G.B.}}:
\batitle{Towards a homotopy domain theory}.
\bjtitle{Archive for Mathematical Logic}
\bvolume{62},
\bfpage{559}--\blpage{579}
(\byear{2023}).
\doiurl{10.1007/s00153-022-00856-0}
\end{barticle}
\endbibitem

\bibitem{Martinez-Rivillas-bsl}
\begin{barticle}
\bauthor{\bsnm{Martinez-Rivillas}, \binits{D.O.}},
\bauthor{\bparticle{de} \bsnm{Queiroz}, \binits{R.J.G.B.}}:
\batitle{The theory of an arbitrary higher $\lambda$-model}.
\bjtitle{Bulletin of the Section of Logic}
\bvolume{51},
\bfpage{39}--\blpage{58}
(\byear{2023}).
\doiurl{10.18778/0138-0680.2023.11}
\end{barticle}
\endbibitem

\bibitem{Ruy-SATS}
\begin{botherref}
\oauthor{\bparticle{de} \bsnm{Queiroz}, \binits{R.J.G.B.}}:
From Tractatus to Later Writings and Back - \\ New Implications from the \emph{Nachlass}.
SATS - Northern European Journal of Philosophy (De Gruyter)
(2023).
\doiurl{10.1515/sats-2022-0016}
\end{botherref}
\endbibitem

\bibitem{Ruy-Dialectica1}
\begin{barticle}
\bauthor{\bparticle{de} \bsnm{Queiroz}, \binits{R.J.G.B.}}:
\batitle{A proof-theoretic account of programming and the role of reduction rules}.
\bjtitle{Dialectica}
\bvolume{42}(\bissue{4}),
\bfpage{265}--\blpage{282}
(\byear{1988})
\end{barticle}
\endbibitem

\bibitem{Ruy-Dialectica2}
\begin{barticle}
\bauthor{\bparticle{de} \bsnm{Queiroz}, \binits{R.J.G.B.}}:
\batitle{Meaning as grammar plus consequences}.
\bjtitle{Dialectica}
\bvolume{45}(\bissue{1}),
\bfpage{83}--\blpage{86}
(\byear{1991})
\end{barticle}
\endbibitem

\bibitem{Ruy-ZML}
\begin{barticle}
\bauthor{\bparticle{de} \bsnm{Queiroz}, \binits{R.J.G.B.}},
\bauthor{\bsnm{Maibaum}, \binits{T.S.E.}}:
\batitle{Proof theory and computer programming}.
\bjtitle{Zeitschrift f\"ur mathematische Logik und Grundlagen der Mathematik}
\bvolume{36}(\bissue{5}),
\bfpage{389}--\blpage{414}
(\byear{1990})
\end{barticle}
\endbibitem

\bibitem{Ruy-ZML2}
\begin{barticle}
\bauthor{\bparticle{de} \bsnm{Queiroz}, \binits{R.J.G.B.}},
\bauthor{\bsnm{Maibaum}, \binits{T.S.E.}}:
\batitle{Abstract data types and type theory: Theories as types}.
\bjtitle{Zeitschrift f\"ur mathematische Logik und Grundlagen der Mathematik}
\bvolume{37}(\bissue{9-12}),
\bfpage{149}--\blpage{166}
(\byear{1991})
\end{barticle}
\endbibitem

\bibitem{Ruy-PhD}
\begin{botherref}
\oauthor{\bparticle{de} \bsnm{Queiroz}, \binits{R.J.G.B.}}:
Proof theory and computer programming. {T}he logical foundations of computation.
PhD thesis,
Imperial College, London
(February 1990).
Department of Computing.
\url{https://spiral.imperial.ac.uk/handle/10044/1/46514}
\end{botherref}
\endbibitem

\bibitem{Ruy-Dialectica3}
\begin{barticle}
\bauthor{\bparticle{de} \bsnm{Queiroz}, \binits{R.J.G.B.}}:
\batitle{Normalisation and language-games}.
\bjtitle{Dialectica}
\bvolume{48}(\bissue{2}),
\bfpage{83}--\blpage{123}
(\byear{1994})
\end{barticle}
\endbibitem

\bibitem{Aruy33}
\begin{bbook}
\bauthor{\bparticle{de} \bsnm{Queiroz}, \binits{R.J.G.B.}},
\bauthor{\bparticle{de} \bsnm{Oliveira}, \binits{A.G.}},
\bauthor{\bsnm{Gabbay}, \binits{D.M.}}:
\bbtitle{The Functional Interpretation of Logical Deduction}.
\bpublisher{World Scientific},
\blocation{Singapore}
(\byear{2011})
\end{bbook}
\endbibitem

\bibitem{Awodney2007}
\begin{barticle}
\bauthor{\bsnm{Awodey}, \binits{S.}},
\bauthor{\bsnm{Warren}, \binits{M.A.}}:
\batitle{Homotopy theoretic models of identity types}.
\bjtitle{Mathematical Proceedings of the Cambridge Philosophical Society}
\bvolume{146}(\bissue{1}),
\bfpage{45}--\blpage{55}
(\byear{2009}).
\doiurl{10.1017/s0305004108001783}
\end{barticle}
\endbibitem

\bibitem{mesenguer}
\begin{barticle}
\bauthor{\bsnm{Meseguer}, \binits{J.}}:
\batitle{Conditional rewriting logic as a unified model of concurrency}.
\bjtitle{Theor. Comput. Sci.}
\bvolume{96}(\bissue{1}),
\bfpage{73}--\blpage{155}
(\byear{1992}).
\doiurl{10.1016/0304-3975(92)90182-F}
\end{barticle}
\endbibitem

\bibitem{harper1}
\begin{botherref}
\oauthor{\bsnm{Harper}, \binits{R.}}:
Type Theory Foundations.
Type Theory Foundations, Lecture at Oregon Programming Languages Summer School, Eugene, Oregon
(2012)
\end{botherref}
\endbibitem

\bibitem{Arttese}
\begin{botherref}
\oauthor{\bsnm{Ramos}, \binits{A.F.}}:
Explicit computational paths in type theory.
PhD thesis,
CIn-UFPE
(August 2018).
Centro de Inform{\'a}tica, Universidade Federal de Pernambuco, Recife, Brazil.
\url{https://repositorio.ufpe.br/handle/123456789/32902}
\end{botherref}
\endbibitem

\bibitem{Piecha}
\begin{barticle}
\bauthor{\bsnm{Piecha}, \binits{T.}},
\bauthor{\bparticle{de} \bsnm{Campos~Sanz}, \binits{W.}},
\bauthor{\bsnm{Schroeder-Heister}, \binits{P.}}:
\batitle{Failure of completeness in proof-theoretic semantics}.
\bjtitle{Journal of Philosophical Logic}
\bvolume{44},
\bfpage{321}--\blpage{335}
(\byear{2015})
\end{barticle}
\endbibitem

\bibitem{lambda}
\begin{bbook}
\bauthor{\bsnm{Hindley}, \binits{J.R.}},
\bauthor{\bsnm{Seldin}, \binits{J.P.}}:
\bbtitle{Lambda-calculus and Combinators: an Introduction}.
\bpublisher{Cambridge University Press}, \blocation{???}
(\byear{2008})
\end{bbook}
\endbibitem

\bibitem{hofmann1}
\begin{bchapter}
\bauthor{\bsnm{Hofmann}, \binits{M.}},
\bauthor{\bsnm{Streicher}, \binits{T.}}:
\bctitle{The groupoid model refutes uniqueness of identity proofs}.
In: \bbtitle{Logic in Computer Science, 1994. LICS'94. Proceedings., Symposium On},
pp. \bfpage{208}--\blpage{212}
(\byear{1994}).
\bcomment{IEEE}
\end{bchapter}
\endbibitem

\bibitem{Ruy2}
\begin{botherref}
\oauthor{\bparticle{de} \bsnm{Queiroz}, \binits{R.J.G.B.}},
\oauthor{\bparticle{de} \bsnm{Oliveira}, \binits{A.G.}}:
Term rewriting systems with labelled deductive systems,
59--72
(1994)
\end{botherref}
\endbibitem

\bibitem{Ruy3}
\begin{barticle}
\bauthor{\bparticle{de} \bsnm{Oliveira}, \binits{A.G.}},
\bauthor{\bparticle{de} \bsnm{Queiroz}, \binits{R.J.G.B.}}:
\batitle{A normalization procedure for the equational fragment of labelled natural deduction}.
\bjtitle{Logic Journal of IGPL}
\bvolume{7}(\bissue{2}),
\bfpage{173}--\blpage{215}
(\byear{1999})
\end{barticle}
\endbibitem

\bibitem{dershowitz}
\begin{barticle}
\bauthor{\bsnm{Dershowitz}, \binits{N.}}:
\batitle{Orderings for term-rewriting systems}.
\bjtitle{Theoretical computer science}
\bvolume{17}(\bissue{3}),
\bfpage{279}--\blpage{301}
(\byear{1982})
\end{barticle}
\endbibitem

\bibitem{hott}
\begin{bbook}
\bauthor{\bsnm{{Univalent Foundations Program}}, \binits{T.}}:
\bbtitle{Homotopy Type Theory: Univalent Foundations of Mathematics}.
\bpublisher{https://homotopytypetheory.org/book},
\blocation{Institute for Advanced Study}
(\byear{2013})
\end{bbook}
\endbibitem

\end{thebibliography}


\end{document}